\newtheorem{prop}{Proposition}
\newtheorem{lem}{Lemma}
\begin{document}

\title{Optimal Resource Allocation in Ground Wireless Networks Supporting Unmanned Aerial Vehicle Transmissions
\thanks{
Y. Hu, G. Sun and A. Schmeink are with the Informationstheorie und Systematischer Entwurf von Kommunikationssystemen (ISEK Research Area), RWTH Aachen University,
52074 Aachen, Germany (e-mail: \{hu,sun,schmeink\}@isek.rwth-aachen.de).}
\thanks{G. Zhang is with the School of Communications and Information Engineering, Xi’an University of Posts and Telecommunications, Xi’an 710121,
China (e-mail: zhangghcast@163.com).}
\thanks{M. C. Gursoy is with Department of Electrical
Engineering and Computer Science, Syracuse University, Syracuse, NY 13244, US (email: mcgursoy@syr.edu).}
}

\author[$\dag$]{
Yulin Hu,~\IEEEmembership{Senior Member,~IEEE,} Guodong Sun, Guohua Zhang, M. Cenk Gursoy,~\IEEEmembership{Senior Member,~IEEE,} and Anke Schmeink,~\IEEEmembership{Senior Member,~IEEE} 
\vspace*{-.35cm}  
}

\maketitle

\begin{abstract}
We consider a fully-loaded  ground  wireless network supporting unmanned aerial vehicle (UAV) transmission services.
To enable the overload transmissions to a ground user (GU) and a UAV, two  transmission schemes are employed, namely non-orthogonal multiple access (NOMA) and relaying, depending on whether or not the GU and UAV are served simultaneously.
Under the assumption of the system operating with infinite blocklength (IBL) codes, the IBL throughputs of both the GU and the UAV are derived under the two schemes.  
More importantly,  we also consider the scenario in which data packets are transmitted via finite blocklength (FBL) codes, i.e., data transmission  to both the UAV and the GU   is performed under low-latency and high reliability constraints. In this setting, the FBL throughputs are characterized again considering the two schemes of NOMA and relaying.
Following the IBL and FBL throughput characterizations, optimal resource allocation designs are subsequently proposed to maximize the UAV throughput while guaranteeing the throughput of the cellular user. 
 Moreover, we prove that   the relaying scheme is able to provide transmission service to the UAV while improving the GU's performance, {and that} the relaying scheme potentially offers       a higher throughput to the UAV in the FBL regime than in the IBL regime.  {On the other hand, the NOMA scheme  provides a   higher UAV throughput (than relaying) by slightly sacrificing the GU's performance.}  
\end{abstract}

\begin{IEEEkeywords}
finite blocklength coding, NOMA, relaying, resource allocation, UAV
\end{IEEEkeywords}

\vspace*{-.2cm}  
\section{Introduction}

\IEEEPARstart{T}he past few years have witnessed a tremendous increase in the deployment and use    of unmanned aerial vehicles (UAVs) in both commercial and civilian applications, such as for aerial surveillance, traffic control, photography, package delivery, and so on~\cite{zeng2016wireless,van2016lte,seo2018drone,duque2018synthesis}. 
Hence, the ground wireless networks are expected to not only serve ground users (GU) but also  provide ubiquitous communication services for the UAVs, i.e., support the communication between the UAVs and their users or controllers~\cite{khosravi2018performance,gapeyenko2018flexible}.
Consequently, ground networks supporting UAV communications have relatively more workload. 

Non-orthogonal multiple access (NOMA) and relaying are known as promising techniques enhancing the performance of overloaded networks.
 By applying the successive interference cancellation (SIC) technique, NOMA enables two (or more) users to transmit data packets in a single time-frequency resource block by exploiting differences in channel conditions~\cite{saito2013system}. 
Authors in~\cite{ding2016impact} address the impact of user pairing on the NOMA system. In~\cite{ali2016dynamic}, the sum-throughput achieved with NOMA is investigated. 
Recently, for a UAV-assisted NOMA network where the UAV serves as  an access-point,     the outage probability of   GUs~\cite{sharma2017uav} and the UAV trajectory designs~\cite{Zhao_2018} are addressed, respectively.
In addition, cooperative relaying has been also shown as an efficient way to improve the system  performance~\cite{zeng2016throughput,johansen2014unmanned} by letting one user (with a better link to the access point) perform as a relay to assist the other weak-link user(s). 
For a UAV network,  it has proposed in~\cite{zeng2016throughput} to apply  the UAV as a relay and study the throughput maximization of the system. In addition, experiments on such     UAV-relaying ad hoc networks are     conducted in~\cite{johansen2014unmanned}. 
{From a resource allocation perspective, the authors in \cite{tang2018novel}     maximize  the precision of a UAV-aided recommendation system by optimally allocating  the resource  of computing, delay, and traffic resource. 
 The allocation of tours of targets to vehicles is investigated in \cite{rathinam2007resource} for a multi-UAV system 
  to minimize the sum of the flying distances of UAVs. The     frequency allocation problem is addressed   in~\cite{tang2018ac}    for a UAV-assisted D2D network.}
However, fundamental characterizations and optimal designs for the network of a UAV  in pair with a GU are yet to be identified under both the NOMA and relaying schemes. 

Moreover, all the above studies are conducted under the assumption of transmitting arbitrarily reliably at Shannon's capacity. In other words, these results are only accurate for the network with significantly long blocklengths.
Note that both UAV and  GU can be deployed to support latency-critical applications with high reliability requirements,  e.g., remote control. 
Due to these critical latency and reliability requirements, data transmissions are realized by codes with short blocklengths, i.e.,  the  networks  operate  in  the  so-called  finite  blocklength (FBL) regime in which the transmissions are no longer arbitrarily reliable~\cite{she2017radio}. 
 To tackle this problem,   FBL information theoretic   bounds are developed in \cite{polyanskiy2010channel} for an additive Gaussian noise (AWGN) in a single hop transmission. In addition, the model has been extended to quasi-static fading channels~\cite{Xu_2016,she2017cross},  cooperative relaying networks~\cite{Hu_2016} and NOMA networks~\cite{Hu_2017_PIMRC,NOMA_FBL_2018}. 
More recently, a joint blocklength and UAV location optimization is studied in~\cite{Pan_2019} for a FBL network where the UAV is considered as  a flying base-station/relay. 
{In addition, the authors in \cite{she2018UAV} consider  a UAV network including   a ground access point (AP) and  UAV-users,  and maximize  the available range of the AP to support FBL  communications for these UAV-users.}
 However,   note that  with  a high probability of having a line-of-sight (LoS) link, a UAV-user generally has   strong channels   from the ground AP and to GUs. On the one hand, significant channel quality difference generally exists between the channels from AP to the UAV-user and     to a GU, which motivates one to apply      NOMA between the two users experiencing the channel difference. 
  On the other hand, as the link from the UAV-user to the GU is likely to be strong, this indicates a potential performance improvement (in both IBL and FBL regimes) for the GU when the UAV-user performs as a relay to assist the data transmission to the GU.   Hence, it is essential to study and optimize the performance of such pair of  users.
{To the best of our knowledge, both the IBL and the FBL performances of a UAV-user  paired   with a GU under either the NOMA scheme or the relaying scheme have yet to be     characterized and maximized via optimal resource allocation.}

 In this work, we consider a   ground wireless network supporting UAV transmissions, where a UAV and a GU are served under either the NOMA or relaying scheme.  The throughputs of UAV and GU are characterized in the IBL regime  and the FBL regime, addressing the latency-non-sensitive applications and low-latency high-reliability applications, respectively. 
 {Following   these characterizations,   we propose optimal resource allocations to maximize the UAV throughput while guaranteeing the GU's transmission quality.}
The contributions of this paper can be further detailed as follows:
\leftmargini=4mm
\begin{itemize}\itemsep=0.7pt
  \item The throughputs of UAV and GU are characterized in both the IBL and FBL regimes. 
  \item  Following   the throughput characterizations,    optimal resource allocation designs are determined  under   the NOMA  and relaying schemes, when the objective is to maximize the UAV throughput while guaranteeing the GU's transmission quality. The designs are addressed  in the IBL regime and the FBL regime, respectively.  
  \item We  further address the feasibility of the considered optimization problems. 
 {In particular, we prove that the  relaying is able to create a win-win situation, i.e.,   provides transmission service to the UAV while improving the GU's performance.  Moreover, we prove that the relaying scheme potentially offers       a higher throughput to the UAV in the FBL regime than in the IBL regime, as in the FBL regime it likely requires less resources to guarantee the GU's transmission quality and therefore     allocates more resources to   the UAV.}
\end{itemize}

The remainder of the paper is organized as follows:
In Section II, we describe the system model, review the FBL performance metrics and provide the problem statement.
In Section III, throughputs of the considered network are characterized under the  NOMA and relaying schemes, respectively. 
Subsequently in Section IV, we  analyze the optimal resource allocation strategies, which maximize the UAV throughput while guaranteeing the GU's transmission quality. The feasibility of the   resource allocation problems are further discussed in Section V. 
We provide our simulation results in Section VI 
and finally conclude the paper in Section VII.

\section{Preliminaries}

In this section, we first briefly describe the considered scenario of a UAV connected ground wireless network. Subsequently, we introduce the channel model representing the wireless link characteristics. Finally, we review the FBL performance model of a single link transmission in comparison to the IBL performance model.  

\subsection{System Description}

We consider a ground wireless network with an AP serving its own GUs. At the same time, a UAV is additionally required to be served beside the GUs. In particular, due to the resource (e.g.,  frequency, power)   limitation in the   network, 
the service to the UAV is     provided by using the existing resources 
i.e.,  without incurring additional costs in terms of more frequency bands or increased transmit power in the network. To satisfy this condition, one practical approach     is to select a GU and use the frequency   band allocated to this GU to serve both the GU and the UAV jointly, as shown in Fig.\ref{UAVcell}. At the same time, the selected GU's transmission quality (with respect to throughput and reliability) should be guaranteed (in comparison to being solely served) when applying such joint transmission schemes\footnote{{Generally, if we select the GU which has a  poor channel to the AP,     its transmission performance via the AP-GU link (being solely served before paring with the UAV) is   low. Hence a similar transmission performance of this GU is relatively easier to be guaranteed by applying joint transmission schemes to the GU-UAV pair.}}.

\begin{figure}[h]
	\centering
	\includegraphics[width = 0.7\linewidth,trim = 0 10 10 35]{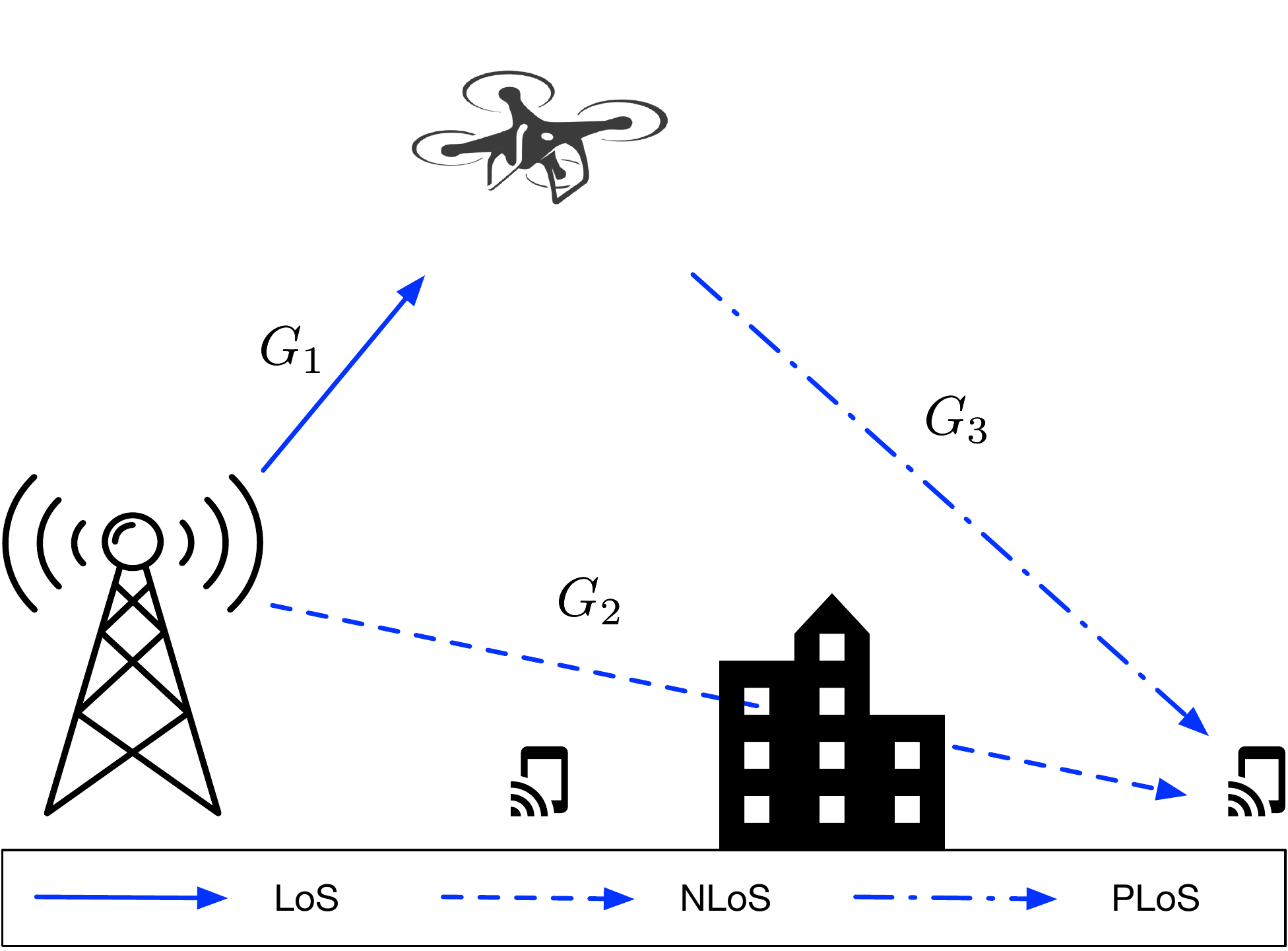}
	\caption{An example of the considered network, where the AP-UAV link has the line of sight (LoS), the AP-GU link has no LoS (NLoS), and the UAV-GU link is possible to have the LoS (PLoS).}
	\label{UAVcell}
\end{figure}

We assume that the system operates in a time-slotted fashion where time is divided into frames with a length of $M$ symbols. 
The gains of the channels from the AP to the UAV, from the AP to the GU, and from the UAV to the GU are denoted by $G_1$, $G_2$, and $G_3$, respectively.
Then, the   channel gains are modeled by $G_i = z_i 10^{-L_i}$, $i=1,2,3$, where  $L_i$, $i=1,2,3$, represent the path losses (in dB) in   these links. 
In addition,  $z_i$, $i=1,2,3$,    denote the gains due to the random channel fading.  
In particular, we assume  that channels    experience block-fading, i.e., $z_i$, $i=1,2,3$, are constant within a frame and vary from one frame to the next.

\subsection{IBL Regime vs. FBL Regime}

First of all, it should be pointed out that for a practical communication system, the blocklength is definitely not infinite.
Hence, when we study the performance of such a system in the IBL regime, this only indicates that the impact of the limited length of coding blocks is ignored in the analysis.
In particular, the analysis in the IBL regime follows the assumption which is only true when the blocklength is infinitely long: a packet is
assumed to be decoded with arbitrarily small error probability given that the coding rate is lower than the Shannon capacity.

In an AWGN channel, the Shannon capacity, which quantifies     throughput  in the IBL regime,  is given in bits/symbol~by
\begin{equation}
	C =\log_2(1+\gamma)\, ,
\end{equation}
where $\gamma$ is the signal to     noise ratio (SNR) or signal to interference plus noise ratio (SINR)       at the receiver. 

Under the more practical assumption that the coding blocklength is finite, i.e., the system operates in the FBL regime, the decoding error may occur even when   the transmission coding rate is set to be lower than the Shannon capacity.
In  an AWGN channel, the normal approximation of the coding rate $r$ (in bits per channel use) of transmission is derived in~\cite{polyanskiy2010channel}. 
 Later on, this approximation is improved to be more tighter in~\cite{Third_Order}, where the third-order term of  the normal approximation is   derived. 
 In particular, 
with    blocklength $m$, block error probability~$\epsilon$, and SNR (or SINR) $\gamma$, the coding rate $r$ is given by \cite{Third_Order} 
\begin{equation} \label{Rfunciton}
\begin{split}
	r &= \mathcal{R}(m, \gamma, \epsilon) \\ 
	&= \log_2(1+\gamma) - \log_2e\sqrt{\frac{\gamma(\gamma+2)}{(\gamma+1)^2m}}Q^{-1}(\epsilon) + \frac{\log m}{m}  +\frac{o(1)}{l}  \,, 
\end{split}	
\end{equation}
where $Q^{-1}(x)$ is the inverse $Q$-function and  the $Q$-function is given by $Q(x) = \int_0^{\infty} \frac{1}{\sqrt{2\pi}}e^{-t^2/2}dt$.
On the other hand, when the coding rate is fixed as $r$, the error probability of the transmission is given by
\begin{equation} \label{Rfunciton}
\epsilon = \mathcal{P}(m, \gamma, r)  = Q\left( {\frac{{{{\log }_2}(1 + \gamma ) - r + \frac{{\log m}}{m} -  \frac{o(1)}{l} }}{{{{\log }_2}e\sqrt {\frac{{\gamma (\gamma  + 2)}}{{{{(\gamma  + 1)}^2}m}}} }}} \right).
\end{equation}
According to the characterizations in~\cite{polyanskiy2010channel},  the achievable coding/data rate in the FBL regime increases in the blocklength. In particular, there exists a performance gap between the Shannon capacity and FBL achievable rate.  This performance gap in a single-hop system is numerically shown in Fig.~\ref{Performance-gap}. 
{\begin{figure}[!b]
\begin{center}
\includegraphics[width=0.7\linewidth, trim=30 15 30 10]{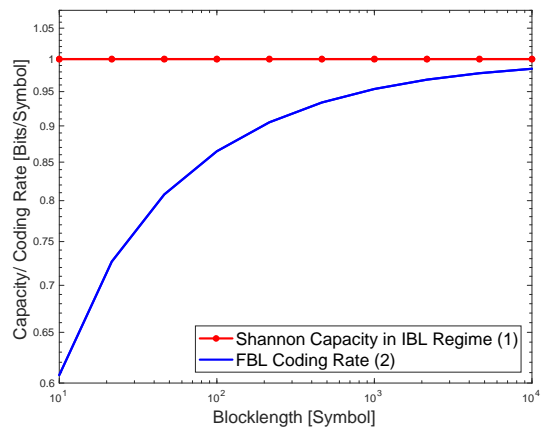}
\end{center}
\caption{Performance gap between the IBL and FBL regimes in a single-hop system with a static channel. The FBL (achievable) rate is calculated according to (2) while  the target error probability is set as $10^{-4}$. }
 \vspace*{-5pt}
\label{Performance-gap}
\end{figure}}
The figure illustrates that the gap is more significant for short blocklengths, and the FBL performance degrades significantly as the blocklength is decreased. 

\subsection{Problem Statements}

Our main focus in this work is to investigate the performance of the UAV-connected wireless system. {We propose to apply two transmission schemes and evaluate the throughput to the UAV.} In particular, we address the following fundamental problems: When comparing NOMA and relaying, which one is more preferred under which condition? How to provide an optimal throughput to the UAV (under both the NOMA and relaying schemes) while guaranteeing the GU's transmission quality by applying optimal resource allocation?
What are the performance   differences in the considered network under the IBL and FBL assumptions?  How different the optimal resource allocation solutions are in the FBL regime in comparison to the IBL regime?

\section{Transmission Schemes and Throughput Characterizations}

In this section, we present the NOMA and relaying schemes and characterize the corresponding throughputs in the links    to the UAV and the GU. The characterizations of the two schemes are provided in both the IBL and FBL regimes.

\subsection{IBL Regime}
%
If the AP serves only its GU and not the UAV,  the   throughput of the GU is given by
\begin{equation}\label{mu_0_IBL}
\mu^{\rm{IBL}}_0 = M \log_2 \left(1+\frac{p_0G_2}{\sigma^2}\right).
\end{equation}

If, on the other hand,  the UAV is served jointly with the GU, the system behavior changes. In the following, we address the throughputs   (including both the UAV and GU) considering the NOMA and relaying schemes, respectively.  

\subsubsection{NOMA} 
 In the NOMA scheme,   AP transmits two signals $\mathbf{x}_{\text{uav}}$ and $\mathbf{x}_{\text{gu}}$ within the same coding length of $M$ symbols via the same frequency band to the UAV and GU simultaneously.
 Denote by $p_1$ and $p_2$ the per symbol transmit power of the two transmitted signals, i.e., $p_1=\mathbb{E} \{\|\mathbf{x}_{\text{uav}}\|^2\}/M$ and $p_2=\mathbb{E} \{\|\mathbf{x}_{\text{gu}}\|^2\}/M$.
 Note that the AP-UAV link is with high probability to have line of sight (LoS) while the link from the AP to the selected GU (that is paired with the UAV) can experience more severe fading (e.g., due to blockages and multipath fading).
 In other words,  the quality of the AP-UAV link  is higher than that of the AP-GU link, i.e., $G_1>G_2$.  Hence, without loss of generality, in the NOMA process we assume $p_2 > p_1$  and let the UAV apply SIC to cancel the inference of $\mathbf{x}_{\text{gu}}$. 

The received signal at the UAV is given by 
 \begin{equation}\label{Received_s_uav}
 y_{1} = \sqrt{p_1G_1}\mathbf{x}_{\text{uav}} + \sqrt{p_2G_1}\mathbf{x}_{\text{gu}} + n \, ,
 \end{equation}
 where $n$ denotes the Gaussian noise with power $\sigma^2$.
Note that the interference signal needs to  be successfully cancelled, which requires $\frac{p_2G_1}{p_1G_1+\sigma^2}>\frac{p_2G_2}{p_1G_2+\sigma^2}$.
Moreover, the IBL throughput of the UAV under the NOMA scheme is given by
\begin{equation}\label{muNOMAuavIBL}
\mu^{\rm{IBL}}_{\text{N},1} = M \log_2\left(1+\frac{p_1G_1}{\sigma^2}\right) \, .
\end{equation}
   
On the other hand, the signal received by the GU is    
   \begin{equation}\label{Received_s_ue}
 y_2 = \sqrt{p_1G_2}\mathbf{x}_{\text{uav}}+\sqrt{p_2G_2}\mathbf{x}_{\text{gu}}+n \, .
 \end{equation}
 Decoding the signal $\mathbf{x}_{\text{gu}}$ while treating  $\mathbf{x}_{\text{uav}}$ as   interference, the GU's throughput is given by
 \begin{equation}\label{muNOMAueIBL}
 	\mu^{\rm{IBL}}_{\text{N},2}= M \log_2\left(1+\frac{p_2G_2}{p_1G_2+\sigma^2}\right) \, .
 \end{equation}
   
\subsubsection{Relaying}
Under the relaying scheme, the UAV becomes both a data receiver and a relay    to forward the GU's data packet. 
In this scheme, the total length $M$ of a frame is further divided into two phases with lengths   $m_1$ and $m_2$, satisfying $m_1 +m_2 =M$. In the first phase, the AP transmits a large data packet (containing the messages intended for  both the UAV and GU)  to the UAV. 
If the UAV decodes the large data packet correctly and recovers both messages, it forwards the GU's data in the second phase. 
Thus, the IBL throughput of  the GU under the relaying scheme is
\begin{equation}\label{RalayIBLue}
 	\mu^{\rm{IBL}}_{\text{R},2}= m_2 \log_2\left(1+\frac{p_2G_3}{\sigma^2}\right) \, .
\end{equation}
Note that the large data transmitted to the UAV includes both the UAV's data and the GU's data. 
Hence, under the relaying scheme the throughput via the link to the UAV is given by $m_1 \log_2\left(1+\frac{p_1G_1}{\sigma^2}\right)$, which equals the sum of  the UAV's IBL throughput   $\mu^{\rm{IBL}}_{\text{R},1}$ and the GU's throughput $\mu^{\rm{IBL}}_{\text{R},2}$. Therefore, the IBL throughput of the UAV can be expressed as
  
\begin{equation}\label{RelayIBLuav}
	\mu^{\rm{IBL}}_{\text{R},1}= m_1 \log_2\left(1+\frac{p_1G_1}{\sigma^2}\right) - \mu^{\rm{IBL}}_{\text{R},2} \, .
\end{equation}

\subsection{FBL Regime}
Note that in the FBL regime, transmissions are not arbitrarily reliable anymore. 
We denote by~$\nu_1$ and  $\nu_2$ the reliability requirements of the transmissions to the UAV and the GU, respectively. 
When only the GU is served by the BS, the total error probability is exactly the decoding error probability at the GU. 
According to~\eqref{Rfunciton}, when the AP only serves the GU, the FBL throughput is given by
\begin{equation}\label{FBL_GU_requirement}
	\mu^{\rm{FBL}}_0 = M\cdot \mathcal {R} \left(M, \frac{p_0G_2}{\sigma^2}, 
	\nu_2 \right) \cdot (1- \nu_2
	) \, ,
\end{equation}

In the following, we discuss the FBL throughput performance of the NOMA and relaying schemes while guaranteeing the GU's reliability requirement. 

\subsubsection{NOMA}

In the NOMA scheme, the UE decodes the received signal given in~\eqref{Received_s_ue} by treating $\mathbf{x}_{\text{uav}}$ as   interference. 
Hence, the GU's the throughput is given by
\begin{equation}\label{FBLmuUserNOMA}
	\mu^{\rm{FBL}}_{\text{N},2} = M \cdot r_2 \cdot (1-\nu_2) \, ,
\end{equation}
where $r_2=\mathcal{R}\left(M,\frac{p_2G_2}{p_1G_2+\sigma^2},\nu_2\right)$ is the coding rate  of the data to the GU.

On the other hand,
the decoding process at the UAV consists of two steps, i.e.,  UAV first applies SIC and subsequently decodes its own signal. Hence, both the SIC errors and the  decoding errors of the UAV's own signal contribute to the  total error probability.
Denote by $\epsilon_{{\text{SIC}}}$ and $\epsilon_{{\text{N},1}^{\ast}}$   
the error probabilities of SIC and decoding the UAV's data packet at the UAV. 
Then,     the total error probability of the transmission  to the UAV is given by $1-(1-\epsilon_{{\text{SIC}}})(1-\epsilon_{{\text{N},1}^{\ast}}) $.
Note that the error probability of this SIC process is given by
\begin{equation}
	\epsilon_{\text{SIC}} = \mathcal{P}\left(M,\frac{p_2G_1}{p_1G_1+\sigma^2}, r_2\right) \, .
\end{equation}
Hence, to satisfy the reliability requirement, i.e.,  $1-(1-\epsilon_{{\text{SIC}}})(1-\epsilon_{{\text{N},1}^{\ast}}) = \nu_1$, the target error probability of the UAV's data transmission is required to be set to\footnote{It should be pointed out that the decoding error probability is set as large as possible to ensure the loosest   constraint so that the maximum throughput could be obtained.} $\epsilon_{{\text{N},1}^{\ast}}=1- \frac{1-\nu_1}{1-\epsilon_{\text{SIC}}}$.

After successfully applying the SIC, the UAV subsequently decodes it own data. 
The FBL throughput of the UAV in the NOMA scheme is given by 
\begin{equation}\label{muNOMAuavFBL}
	\begin{split}
	\mu^{\rm{FBL}}_{\text{N},1}  &= M \cdot  r_1 \cdot (1-\epsilon_{{\text{SIC}}})(1-\epsilon_{{\text{N},1}^{\ast}}) \\ 
	& = M \cdot  r_1 \cdot (1-\nu_1),   
	\end{split}
\end{equation}
where $r_1 = \mathcal{R}(M, \frac{p_1G_1}{\sigma^2}, \epsilon_{{\text{N},1}^{\ast}})$ is the coding rate of the data packet to the UAV.

\subsubsection{Relaying}

Under the relaying scheme, the AP first transmits a large data packet in first phase to the UAV. 
The decoding error probability of this transmission is exactly the total error probability of  the UAV's data transmission, i.e., it is required to satisfy the reliability constraint~$\nu_1$. 
Hence, the FBL throughput of the transmission to the UAV, including the data for both UAV and GU,  is given by
\begin{equation}
\mu^{\rm{FBL}}_{\text{R},1+2} = m_1 \cdot \mathcal {R}\left(m_1, \frac{p_1G_1}{\sigma^2}, \nu_1\right)\cdot (1-\nu_1)\, .
\end{equation}

As long as the UAV's decoding process is successful, the UAV forwards the GU's data packet in the second phase. 
Denote by the $\epsilon_{{\text{R},2}^{\ast}}$ the target decoding error probability at the GU. According to the total reliability constraint of GU $\nu_2$, it holds that $\nu_2=1-(1-\nu_1)(1-\epsilon_{{\text{R},2}^{\ast}})$. Hence, we have $\epsilon_{{\text{R},2}^{\ast}} =1- \frac{1-\nu_2}{1-\nu_1}$.
Then, the GU's FBL throughput is given by  
\begin{equation}
\begin{split}
    	\mu^{\rm{FBL}}_{\text{R},2} &= m_2   \mathcal {R}\left(m_2, \frac{P_2G_3}{\sigma^2}, \epsilon_{{\text{R},2}^{\ast}}\right)\cdot(1-\nu_1)(1-\epsilon_{{\text{R},2}^{\ast}})\\
    	&=m_2   \mathcal {R}\left(m_2, \frac{P_2G_3}{\sigma^2}, \epsilon_{{\text{R},2}^{\ast}}\right)\cdot(1-\nu_2) \, .
\end{split}
\end{equation}
Finally, the UAV's throughput can be obtained as
\begin{equation}\label{FBLRelay}
	\mu^{\rm{FBL}}_{\text{R},1}  =  m_1 \cdot \mathcal {R}\left(m_1, \frac{p_1G_1}{\sigma^2}, \nu_1\right)\cdot (1-\nu_1) - \mu^{\rm{FBL}}_{\text{R},2} \, .
\end{equation}

So far, we have characterized the throughput performance of the considered NOMA and relaying schemes in both the IBL and FBL regimes. Following these characterizations, we provide the corresponding optimal resource allocation designs in the next section.

\section{Resource Allocation for Maximal UAV Throughput}

In this section, by applying resource allocation, we investigate the optimal UAV throughput  under both the NOMA and relaying schemes.  
{Our objective is to maximize the throughput without using any additional resources (e.g., in terms of transmission blocklength and  transmit power), 
while guaranteeing the GU's transmission requirements including throughput and reliability (in the FBL case).} In particular, we seek to guarantee the GU a weighted throughput level of $\beta \mu_0$, where $\mu_0$ is the GU throughput if the UAV is not jointly served, and $\beta >0$ is the minimum rate coefficient, which is a key factor  that indicates the level of service guarantee. For example, $\beta<1$ indicates that 
only partial throughput can be provided to the GU in comparison to the   case in which the UAV is not serviced. On the other hand, $\beta>1$ represents the scenario in which GU performance is required to be better than the non-UAV-connected case.  

\subsection{UAV Throughput Maximization under the NOMA Scheme }

According to \eqref{muNOMAuavIBL} and \eqref{muNOMAuavFBL}, 
the UAV throughput in the NOMA scheme can be maximized by choosing an optimal $p_1$. 
The general optimization problem of the NOMA scheme (in both the IBL and FBL regimes) is formulated as:
\begin{maxi!}|s|[2]
{p_1}{\mu_{\text{N},1}}
{\label{NOMA_PA_or}}{} 
\addConstraint{\mu_{\text{N},2}}{\geq \beta \mu_{0} \label{NOMAg}}
\addConstraint{p_1+p_2} {\leq p_0 },{\quad p_1,p_2 \in R^+ \, . \label{NOMAp}}
\end{maxi!}
We have the following proposition for the above problem 
\begin{prop}\label{strict_bound}
	The optimal throughput offered to the UAV is obtained when the inequality constraints in~\eqref{NOMAg} and~\eqref{NOMAp} are satisfied with equality.
\end{prop}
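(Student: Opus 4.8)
The plan is to prove the statement by an exchange argument that reduces the two-variable problem~\eqref{NOMA_PA_or} to a one-dimensional problem on the power boundary $p_1+p_2=p_0$. First I would collect the monotonicity facts that drive everything. In the IBL regime these follow directly from~\eqref{muNOMAuavIBL}--\eqref{muNOMAueIBL}: $\mu^{\rm{IBL}}_{\text{N},1}$ is strictly increasing in $p_1$, while the GU SINR $\gamma_2:=\frac{p_2G_2}{p_1G_2+\sigma^2}$, and hence $\mu^{\rm{IBL}}_{\text{N},2}$, is strictly increasing in $p_2$ and strictly decreasing in $p_1$. In the FBL regime I would first record that $\mathcal{R}(m,\gamma,\epsilon)$ is increasing in $\gamma$ and in $\epsilon$, equivalently that $\mathcal{P}(m,\gamma,r)$ is decreasing in $\gamma$ and increasing in $r$ (both obtained by differentiating the normal approximation~\eqref{Rfunciton}). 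Then $\mu^{\rm{FBL}}_{\text{N},2}=M r_2(1-\nu_2)$ inherits the same monotonicity in $(p_1,p_2)$ through $r_2=\mathcal{R}\big(M,\gamma_2,\nu_2\big)$; and since $G_1>G_2$ forces the UAV's SIC SINR $\gamma_{\text{SIC}}:=\frac{p_2G_1}{p_1G_1+\sigma^2}$ to exceed $\gamma_2$, monotonicity of $\mathcal{P}$ gives $\epsilon_{\text{SIC}}=\mathcal{P}\big(M,\gamma_{\text{SIC}},r_2\big)\in(0,\nu_2)$, so that $\epsilon_{{\text{N},1}^{\ast}}=1-\frac{1-\nu_1}{1-\epsilon_{\text{SIC}}}$, and hence $\mu^{\rm{FBL}}_{\text{N},1}=M r_1(1-\nu_1)$, stays controlled.

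Granting these facts, the argument is short. First, any optimum lies on the power boundary: if $(p_1,p_2)$ is feasible with $p_1+p_2<p_0$, then $(p_1,\,p_0-p_1)$ is again feasible --- constraint~\eqref{NOMAg} is preserved because $\mu_{\text{N},2}$ increases in $p_2$ --- and does not reduce the objective, since $\mu^{\rm{IBL}}_{\text{N},1}$ does not depend on $p_2$ and $\mu^{\rm{FBL}}_{\text{N},1}$ depends on $p_2$ only through $\epsilon_{\text{SIC}}$, whose change under a pure increase of $p_2$ (the post-SIC SNR $p_1G_1/\sigma^2$ entering $r_1$ being untouched) does not lower $r_1$. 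Hence it suffices to optimize over $p_2=p_0-p_1$, which already makes~\eqref{NOMAp} an equality. Second, on this boundary $\mu_{\text{N},2}$ is a strictly decreasing function of $p_1$ (its SINR numerator falls while the denominator rises) and $\mu_{\text{N},1}$ is strictly increasing in $p_1$; constraint~\eqref{NOMAg} is active (otherwise $p_1$ could be pushed to $p_0$, leaving the GU essentially unserved), so the feasible range of $p_1$ is an interval $[0,p_1^{\max}]$ with $\mu_{\text{N},2}(p_1^{\max})=\beta\mu_0$. A strictly increasing objective on $[0,p_1^{\max}]$ is maximized at $p_1^{\max}$, where \eqref{NOMAg} holds with equality and, by construction, so does~\eqref{NOMAp}. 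This proves the proposition in both regimes.

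The step that needs genuine work is the FBL monotonicity of $\mu^{\rm{FBL}}_{\text{N},1}$ invoked above --- that it does not decrease when $p_2$ is raised at fixed $p_1$, and that it is strictly increasing in $p_1$ along $p_2=p_0-p_1$. Both hinge on the behaviour of $\epsilon_{\text{SIC}}$, and the obstacle is that moving power toward $p_1$ simultaneously lowers $\gamma_{\text{SIC}}$ and $r_2$, which act on $\epsilon_{\text{SIC}}$ in opposite directions, so monotonicity alone does not fix the sign. I would resolve this by isolating the argument of the $Q$-function defining $\epsilon_{\text{SIC}}$, rewriting it as $\log_2\frac{1+\gamma_{\text{SIC}}}{1+\gamma_2}$ plus a channel-dispersion term, using $G_1>G_2$ to keep $\epsilon_{\text{SIC}}$ confined to $(0,\nu_2)$ --- so the perturbation it induces on $r_1$ via $\epsilon_{{\text{N},1}^{\ast}}$ is uniformly bounded --- and then verifying that this bounded perturbation is dominated by the monotone gain of the $\log_2(1+p_1G_1/\sigma^2)$ term of $r_1$ as $p_1$ grows. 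In the IBL regime all of these $\epsilon_{\text{SIC}}$-related terms are absent, and the argument collapses to elementary log-monotonicity.
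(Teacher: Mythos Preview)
Your approach differs from the paper's. The paper argues by contradiction: it assumes an optimal pair $(p_1',p_2')$ with $p_1'+p_2'<p_0$ and reallocates the leftover power \emph{proportionally} to both users, showing (with IBL formulas only) that both throughputs strictly increase; constraint~\eqref{NOMAg} is then dispatched with a one-word ``Similarly.'' You instead push one coordinate at a time and invoke monotonicity directly. In the IBL regime your two-step exchange is correct and arguably cleaner than the paper's sketch, which even carries a spurious $p_2$ term in the GU SINR denominator and leaves the \eqref{NOMAg} case entirely to the reader.

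Where you should be more careful is Step~1 in the FBL regime. You assert that raising $p_2$ at fixed $p_1$ ``does not lower $r_1$,'' i.e., that $\epsilon_{\text{SIC}}$ does not increase. But increasing $p_2$ raises both $\gamma_{\text{SIC}}$ and $r_2$, which pull $\epsilon_{\text{SIC}}=\mathcal{P}(M,\gamma_{\text{SIC}},r_2)$ in opposite directions, and the resolution you sketch in your third paragraph---domination by the monotone gain of $\log_2(1+p_1G_1/\sigma^2)$ as $p_1$ grows---covers only Step~2: in Step~1 that term is constant because $p_1$ is held fixed, and the confinement $\epsilon_{\text{SIC}}\in(0,\nu_2)$ does not by itself rule out an increase within that range. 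To close this you would need to show directly that the $Q$-argument is nondecreasing in $p_2$ (in the high-SNR regime this follows because $V(\gamma)$ saturates while a short computation using $G_1>G_2$ shows $\log_2\frac{1+\gamma_{\text{SIC}}}{1+\gamma_2}$ is increasing in $p_2$), or else borrow the paper's proportional scaling $(p_1,p_2)\mapsto(\lambda p_1,\lambda p_2)$ for Step~1, which at least raises the post-SIC SNR and gives you a genuine gain term to dominate the $\epsilon_{\text{SIC}}$ perturbation. The paper does not spell out the FBL case either, so you are not behind it on this point, but your proposal would be stronger with this loop closed.
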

\begin{proof}
 We prove the result by contradiction. Regarding the sum power constraint in \eqref{NOMAp},  we first assume that there exists an optimal solution $(p_1', p_2')$ satisfying the constraints with strict inequality, i.e. $p_0 - p_1'+p_2' = p_n'>0$. Hence the optimal throughput offered to the UAV and the GU are $\mu_{uav}$ and $\mu_{gu}$. On the other hand, we could further allocate the power left to UAV $p_1''$ and GU $p_2''$ in proportion to $p_1'$ and $p_2'$.Thus the data rate to the GU is $c_{gu} = \log_2\left(1+\frac{(p_2'+p_2'')G_2}{(p_1'+p_1''+ p_2'+p_2'')G_2+\sigma^2}\right)$. Since $\frac{(p_2'+p_2'')G_2}{(p_1'+p_1''+ p_2'+p_2'')G_2+\sigma^2} > \frac{p_2'G_2}{(p_1'+p_2')G_2 + \frac{\sigma^2 p_1'}{p_1'+p_1''}}$, the GU could get more throughput. At the same time, the data rate of UAV is $c_{uav} = \log_2\left(1+\frac{(p_1'+p_1'')G1}{\sigma^2}\right)$, and thus the throughput to the UAV is also higher. Therefore, the assumption that $(p_1', p_2')$ is the optimal solution to the problem is violated. i.e., the optimal power allocation satisfies \eqref{NOMAp} with equality.  
  Similarly, the strict equality   of~\eqref{NOMAg} and guaranteeing constraint can be proved. 
\end{proof}

According to Proposition~\ref{strict_bound}, the original optimization problem for the NOMA scheme is equivalent to
\begin{maxi!}|s|[2]
{p_1}{\mu_{\text{N},1}}
{\label{NOMA_PA}}{} 
\addConstraint{\mu_{\text{N},2}}{= \beta \mu_{0}} \label{NOMA_PAb}
\addConstraint{p_1+p_2}{= p_0 },{\quad p_1,p_2 \in R^+}.\label{NOMA_PAc}
\end{maxi!}
We in the following solve the optimization problem given in~\eqref{NOMA_PA}  in the IBL and FBL regimes, respectively. 

\subsubsection{UAV Throughput Maximization under the NOMA Scheme in the IBL Regime} 
According to~\eqref{muNOMAueIBL}, constraint \eqref{NOMA_PAb} is equivalent to the following equation  

\begin{equation}
\label{NOMA_snr_gu}
\gamma^{\rm{IBL}}_{\text{gu}} \triangleq	\frac{p_2G_2}{p_1G_2+\sigma^2} = 2^{\frac{\beta \mu^{\rm{IBL}}_{0}}{M}} + 1,
\end{equation}
where $\gamma^{\rm{IBL}}_{\text{gu}}$ represents the received SINR for the GU   when decoding its own data packet.
Clearly, $\gamma^{\rm{IBL}}_{\text{gu}}$ is a constant for a given $\beta$.

According to \eqref{NOMA_PAc},  $p_2 = p_0 - p_1$ holds. Substituting this into  \eqref{NOMA_snr_gu}, the optimal solution (of the NOMA scheme in the IBL regime) to Problems \eqref{NOMA_PA_or} and \eqref{NOMA_PA} is given by

\begin{equation}\label{SINR_P_2}
	p^{\rm{IBL}}_{{{\rm{N}},1}^*}
	= p_0 - \frac{\gamma^{\rm{IBL}}_{\text{gu}} p_0G_2+\gamma^{\rm{IBL}}_{\text{gu}} \sigma^2}{G_2\left(1+\gamma^{\rm{IBL}}_{\text{gu}}\right)}.
\end{equation}

By inserting $p_1=p^{\rm{IBL}}_{{{\rm{N}},1}^*}$ and $p_2=p_0-p_1$ into \eqref{muNOMAuavIBL},  the optimal throughput of the UAV is obtained.

\subsubsection{UAV Throughput Maximization under the NOMA Scheme in the FBL Regime}
 Similarly, in the FBL regime, to guarantee  constraint   \eqref{NOMA_PAb}, the coding rate of the data packet to the GU should satisfy  $r_2 = \frac{\beta \mu^{\rm{FBL}}_0}{M(1-\nu_2)}$.   
  Hence, according to \eqref{FBLmuUserNOMA} the SINR at the GU should satisfy
 \begin{equation}
 	\gamma^{\rm{FBL}}_{\text{gu}} = \mathcal{P}^{-1}(M, \nu_2, r_2).
 \end{equation}
By substituting $p_2 = p_0-p_1$, the optimal power allocation is obtained as
\begin{equation}
    p^{\rm{FBL}}_{{{\rm{N}},1}^*}
	= p_0 - \frac{\mathcal{P}^{-1}(M, \nu_2, r_2) (p_0G_2+ \sigma^2)}{G_2\left(1+\mathcal{P}^{-1}(M, \nu_2, r_2)\right)}.
\end{equation}
Hence, we have $p_2 = p_0 - p^{\rm{FBL}}_{{{\rm{N}},1}^*}$. Then, the   optimal throughput of the UAV and the corresponding throughput of GU in the FBL regime can be immediately  calculated according to \eqref{muNOMAuavFBL} and \eqref{FBLmuUserNOMA}.

\subsection{UAV Throughput Maximization under the Relaying Scheme }
{In the relaying scheme, the total frame length of $M$ symbols is separated into two phases with lengths of $m_1$ and $m_2$.} 
Hence, to maximize the throughput of the UAV in  the IBL regime given in~\eqref{RelayIBLuav}  and in the FBL regime given in~\eqref{FBLRelay}, the optimal transmit powers $p_1$ and $p_2$ as well as  the optimal blocklengths $m_1$ and $m_2$ should be determined. The optimization problem in the relaying scheme is formulated as:
\begin{maxi!}|s|[2]
{p_1,m_1}{\mu_{\text{R},1}}
{\label{Relaying_JA}}{} 
\addConstraint{\mu_{\text{R},1}}{\geq 0} \label{Relaying_JAb}
\addConstraint{\mu_{\text{R},2}}{\geq \beta \mu_{0}  \label{guaranteeUEJL2}}
\addConstraint{m_1p_1+m_2p_2}{\leq Mp_0},{\quad p_1,p_2 \in R^+}\label{resourceP2}
\addConstraint{m_1+m_2}{\leq M \label{resourceM}},{\quad m_1,m_2 \in R^+},
\end{maxi!}

\vspace{-.3cm}
\noindent where the constraint \eqref{Relaying_JAb} indicates that the   throughput of the first phase  should be higher than the second phase. 
Similar to the discussion in Proposition~1, the inequalities in    \eqref{guaranteeUEJL2}, \eqref{resourceP2} and  \eqref{resourceM} should be satisfied with equality 
 to maximize ${\mu_{\text{R},1}}$. Hence, the above problem is equivalent to 
\begin{maxi!}|s|[2]
{p_1,m_1}{\mu_{\text{R},1}}
{\label{Relaying_JAe}}{} 
\addConstraint{\mu_{\text{R},1}}{\geq 0} \label{Relaying_JAbe}
\addConstraint{\mu_{\text{R},2}}{= \beta \mu_{0} \label{guaranteeUEJLe}}
\addConstraint{m_1p_1+m_2p_2} {= Mp_0 ,\label{resourcePe}}{\!\!\! \!\!\! \!\!\! p_1,p_2 \in R^+}
\addConstraint{m_1+m_2}{= M \label{resourceMe}},{\quad m_1,m_2 \in R^+}.
\end{maxi!}
%
In the following, we solve the problem 
in both the IBL and FBL regimes.

\subsubsection{UAV Throughput Maximization   under the Relaying Scheme in the IBL Regime}
We first propose the following key proposition to solve Problem~\eqref{Relaying_JAe} in the IBL regime. 
\begin{prop}\label{convexity_prop2}
	Problem~\eqref{Relaying_JAe} is convex in the IBL regime.  
\end{prop}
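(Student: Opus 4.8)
The plan is to reduce \eqref{Relaying_JAe} to a genuine concave maximization by a change of variables together with one relaxation; the point worth flagging at the outset is that the raw objective $m_1\log_2(1+p_1G_1/\sigma^2)$ from \eqref{RelayIBLuav} is \emph{not} jointly concave in $(m_1,p_1)$ (its Hessian is indefinite), so the convexity claim has to be read after a reformulation. First I would pass to the per-phase energy variables $E_1=m_1p_1$ and $E_2=m_2p_2$; the map $(m_1,p_1)\mapsto(m_1,E_1)$ is invertible for $m_1>0$, and it is in these coordinates (the ones in which the problem is actually solved) that convexity holds. The budget constraint \eqref{resourcePe} then reads $E_1+E_2=Mp_0$ and is \emph{affine}, and \eqref{resourceMe} is already affine, so I can use them to substitute $m_2=M-m_1$ and $E_2=Mp_0-E_1$, keeping $(m_1,E_1)$ as the only free pair subject to the box $0\le m_1\le M$, $0\le E_1\le Mp_0$.

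Next I would linearize the objective using \eqref{guaranteeUEJLe}: on the feasible set $\mu^{\rm{IBL}}_{\text{R},2}=\beta\mu_0$, so by \eqref{RelayIBLuav} the objective can be replaced, without changing the optimal set, by
\[
\phi_1(m_1,E_1):=m_1\log_2\!\left(1+\frac{G_1 E_1}{\sigma^2 m_1}\right)-\beta\mu_0 .
\]
This is a constant shift of the perspective function of the concave map $x\mapsto\log_2(1+G_1x/\sigma^2)$, hence jointly concave in $(m_1,E_1)$ for $m_1>0$; consequently constraint \eqref{Relaying_JAbe}, namely $\phi_1\ge 0$, is a superlevel set of a concave function and hence convex.

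It remains to handle \eqref{guaranteeUEJLe} itself, which in the new variables reads
\[
\phi_2(m_1,E_1):=(M-m_1)\log_2\!\left(1+\frac{G_3(Mp_0-E_1)}{\sigma^2(M-m_1)}\right)=\beta\mu_0 .
\]
Here $\phi_2$ is again a perspective function of a concave map composed with affine maps, hence jointly concave, and I would therefore relax the equality to $\phi_2\ge\beta\mu_0$. The feasible region is then the intersection of superlevel sets of concave functions with a box, which is convex, and maximizing the concave $\phi_1$ over it is a convex program. To close the argument I would verify that the relaxation is tight, by the same reasoning as in Proposition~\ref{strict_bound}: if $\phi_2>\beta\mu_0$ at an optimizer, then $E_1<Mp_0$ there, and increasing $E_1$ slightly (decreasing $E_2$ correspondingly) keeps $\phi_2\ge\beta\mu_0$ by continuity while strictly increasing $\phi_1$, which contradicts optimality; hence the convex relaxation shares the optimizer of \eqref{Relaying_JAe}.

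The real obstacle is locating the two reformulation steps — the energy substitution, without which joint concavity simply fails, and the observation that the nonconvex equality \eqref{guaranteeUEJLe} must first be used to remove $\mu^{\rm{IBL}}_{\text{R},2}$ from the objective and only afterwards relaxed. Once these are in place, the rest is the textbook fact that a perspective of a concave function is concave, plus the tightness argument already developed for Proposition~\ref{strict_bound}. The one point requiring a little care is the boundary $m_1\to 0$ or $m_1\to M$ (equivalently $m_2\to 0$), where the perspective functions are continued by their limiting values and the feasible set should be checked to remain closed; this does not affect the convexity conclusion.
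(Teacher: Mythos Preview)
Your argument is correct, but it proceeds quite differently from the paper's proof. The paper uses all three equality constraints \eqref{guaranteeUEJLe}--\eqref{resourceMe} to eliminate $p_1$, $p_2$, and $m_2$ in favour of the single variable $m_1$, obtaining the closed-form expression \eqref{objectiveformulate} for $\mu^{\rm{IBL}}_{\rm{R},1}$; it then introduces an auxiliary $t(m_1)$, computes $\partial t/\partial m_1\le 0$ and $\partial^2 t/\partial m_1^2\le 0$ explicitly, and assembles the second derivative of the objective to conclude concavity in $m_1$ by direct calculation. No relaxation or change of variables is used; the proof is entirely computational.

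Your route instead keeps two variables, passes to per-phase energies $(m_1,E_1)$, and recognises both $\phi_1$ and $\phi_2$ as perspective transforms of concave scalar functions, so that joint concavity is immediate and no derivatives need to be written out. The price is that the nonlinear equality \eqref{guaranteeUEJLe} must be relaxed to an inequality and then shown tight, which you do correctly by the monotonicity of $\phi_1$ in $E_1$. What the paper's approach buys is a self-contained one-variable problem with no relaxation step; what yours buys is a computation-free, structural argument that would generalise more readily (for instance to other concave rate functions or to additional affine resource constraints). Both establish that \eqref{Relaying_JAe} is equivalent to a convex program, which is the operative content of the proposition.
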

\begin{proof}
Note that  $\mu_{\text{R},2}=\beta \mu^{\rm{IBL}}_0$. Hence, we have $p_2 = \frac{\sigma^2}{G_3} (2^{\frac{ \beta \mu^{\rm{IBL}}_0 }{M-m_1}}-1)$.
To guarantee \eqref{resourcePe} and \eqref{resourceMe}, we should have
\begin{equation}
\label{p_1inproof}
  p_1 = \left(Mp_0-(M-m_1)\left(2^{\frac{ \beta \mu_0}{M-m_1}}-1\right)\sigma^2/G_3\right)/m_1.  
\end{equation}

Substituting~\eqref{p_1inproof} into  \eqref{RelayIBLuav}, the   throughput can be expressed as 
\begin{equation}\label{objectiveformulate}
\begin{split}
	\mu^{\rm{IBL}}_{\rm{R},1} =  & ~m_1\log_2\!\!\left(\!1\!+\!\frac{\left(Mp_0\!-\!(M\!-\!m_1)\frac{2^{\frac{\beta \mu_0}{M-m_1}}-1}{G_3}\sigma^2\right)G_1}{m_1\sigma^2}\right) \\
	&-\beta \mu_0,	
\end{split}
\end{equation}
in terms of the single variable $m_1$.
Hence, the proposition is proved {if $\mu^{\rm{IBL}}_{\rm{R},1}$ in \eqref{objectiveformulate} is   concave in $m_1$}. 
We show this concavity as follows.

First, we introduce an auxiliary function $t$ with respect to~$m_1$
\begin{equation}
	t  = \left(Mp_0-(M-m_1)\frac{2^{\frac{\beta \mu_0}{M-m_1}}-1}{G_3}\sigma^2\right)\frac{G_1}{\sigma^2}.
\end{equation}

We further obtain the first and second order derivatives of $t(m_1)$:  
\begin{equation}
	\frac{\partial t}{\partial m_1} = \frac{G_1}{G_3}\left(2^{\frac{\beta \mu_0}{M-m_1}}-1-\frac{\beta \mu_0 \log_e(2)}{M-m_1}2^{\frac{\beta \mu_0}{M-m_1}}\right).
\end{equation}
By introducing $g=2^{\frac{\beta \mu_0}{M-m_1}} \ge 0$, we have
\begin{equation}\label{firstpar}
	\frac{\partial t}{\partial m_1} = \frac{G_1}{G_3}\left((1-g\log_e2)2^g-1\right) \le 0.
\end{equation}
In addition, we have 
\begin{equation}\label{secpar}
	\frac{\partial^2t}{\partial m_1^2} = - \frac{G_1}{G_3}\frac{\beta^2M^2(\log_e2)^2 2^{\frac{\beta \mu_0}{M-m_1}}}{2(M-m_1)^3} \le 0.
\end{equation}

Then, the objective in (\ref{objectiveformulate}) can be   expressed as 
\begin{equation}\label{Obj4Hessian}
	\mu^{\rm{IBL}}_{\rm{R},1}  = m_1\log_2\left(1 + \frac{t }{m_1}\right) - \beta \mu_0.
\end{equation}
We can obtain the   second order derivative of $\mu^{\rm{IBL}}_{\rm{R},1}$
with respect to $m_1$ as
\begin{equation}\label{toproveconcave}
	 \frac{\partial^2 \mu^{\rm{IBL}}_{\rm{R},1} }{{\partial {m_1}^2}}= 
	 \frac{m_1^3\frac{\partial^2t}{\partial m_1^2}\!-\!m_1^2 (\frac{\partial t}{\partial m_1})^2\!-\!t ^2\!\!+\!m_1^2t \frac{\partial ^2t}{\partial m_1^2}\!\!+\!2t (\frac{\partial t}{\partial m_1})^2}{\log_e(2)(m_1+t )^2},
\end{equation}
where 
$m_1$ is by definition positive.
According to~\eqref{firstpar} and~\eqref{secpar}, $\frac{\partial^2 \mu^{\rm{IBL}}_{\rm{R},1} }{{\partial {m_1}^2}}\le0$ holds. 
\end{proof}

According to Proposition~\ref{convexity_prop2}, Problem~\eqref{Relaying_JAe} can be solved efficiently in the IBL regime.

\

\subsubsection{UAV   Throughput Maximization under the Relaying Scheme in  the FBL Regime}
Note that the $\mathcal{R}$ function in \eqref{Rfunciton} is not concave. Hence, the objective function in Problem \eqref{Relaying_JAe} is not concave in the FBL regime, i.e.,  Problem \eqref{Relaying_JAe} is not convex. In the following, we solve the problem in the FBL regime by utilizing the monotonic property.

\begin{prop}\label{Monoton_prop3}
 The objective of Problem \eqref{Relaying_JAe} is a monotonic function with respect to  two variables $p_1$ and $m_1$, if the blocklength and reliability requirements are within a range of practical interest,  i.e., $m\geq 84$ (which is the smallest blocklength in  LTE-A) and $\epsilon<10^{-1}$.
\end{prop}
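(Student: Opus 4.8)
The plan is to use the binding equality constraints of Problem~\eqref{Relaying_JAe} to express the objective as an explicit function of the two optimization variables $p_1$ and $m_1$, and then to show that this function is strictly increasing in each of them on the feasible set, so that Problem~\eqref{Relaying_JAe} becomes amenable to monotonic optimization. Concretely, \eqref{resourceMe} gives $m_2=M-m_1$, and \eqref{guaranteeUEJLe} fixes $\mu^{\rm{FBL}}_{\text{R},2}$ at the constant $\beta\mu_0$, so by \eqref{FBLRelay} the objective reduces to
\begin{equation*}
  \mu^{\rm{FBL}}_{\text{R},1}=(1-\nu_1)\,\Phi\!\left(m_1,\tfrac{p_1G_1}{\sigma^2}\right)-\beta\mu_0 ,\qquad \Phi(m,\gamma)\triangleq m\,\mathcal{R}(m,\gamma,\nu_1),
\end{equation*}
while \eqref{Relaying_JAbe}, \eqref{resourcePe} and $p_2\ge 0$ cut out the feasible region in the $(p_1,m_1)$-plane. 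Since $\beta\mu_0$ is constant, it suffices to study the monotonicity of $\Phi$.

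Monotonicity in $p_1$ is the easy direction: $\gamma=p_1G_1/\sigma^2$ is strictly increasing in $p_1$, and from the normal approximation in~\eqref{Rfunciton} one verifies that $\gamma\mapsto\mathcal{R}(m,\gamma,\epsilon)$ is strictly increasing on the operating SNR range (the $\log_2(1+\gamma)$ term grows while the dispersion factor $\sqrt{\gamma(\gamma+2)/(\gamma+1)^2}$ stays below $1$ and varies slowly), so $\Phi$, and hence $\mu^{\rm{FBL}}_{\text{R},1}$, is strictly increasing in $p_1$; no restriction on $m$ or $\epsilon$ is needed here.

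The substantive direction is monotonicity in $m_1$, and this is where the hypotheses $m\ge 84$ and $\epsilon<10^{-1}$ enter. I would argue it as follows. First, on the feasible set constraint~\eqref{Relaying_JAbe} forces $\mu^{\rm{FBL}}_{\text{R},1}\ge 0$, hence $\mathcal{R}(m_1,\gamma,\nu_1)>0$, which bounds the operating SNR away from $0$. Second, differentiating $\Phi(m_1,\gamma)=m_1\mathcal{R}(m_1,\gamma,\nu_1)$ in $m_1$, the capacity term contributes $\log_2(1+\gamma)$, the dispersion term contributes the positive quantity $\tfrac12\log_2e\,Q^{-1}(\nu_1)\sqrt{\gamma(\gamma+2)/(\gamma+1)^2}\,m_1^{-1/2}$, and the third-order $\tfrac{\log m}{m}$ and $o(1)$ corrections contribute only $O(m_1^{-1}\log m_1)$ terms that are dominated once $m_1\ge 84$; since $Q^{-1}(\nu_1)$ is positive and bounded for $\nu_1<10^{-1}$, all contributions to $\partial\Phi/\partial m_1$ are then nonnegative with the leading one strictly positive. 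Equivalently, on the feasible set $\mathcal{R}(m_1,\gamma,\nu_1)>0$ and $\mathcal{R}(\cdot,\gamma,\nu_1)$ is increasing in blocklength for $m\ge 84$, $\epsilon<10^{-1}$, so $\Phi=m_1\mathcal{R}$ is a product of two positive increasing functions. If one prefers to keep $\mu^{\rm{FBL}}_{\text{R},2}$ as a function of $m_2=M-m_1$ rather than fixing it, the same computation shows $m_2\,\mathcal{R}(m_2,p_2G_3/\sigma^2,\epsilon_{{\text{R},2}^{\ast}})$ is increasing in $m_2$, i.e. decreasing in $m_1$, so the subtracted term only reinforces the conclusion.

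The main obstacle is precisely this quantitative step: the normal approximation~\eqref{Rfunciton}, with the $\tfrac{\log m}{m}$ and $o(1)$ remainder retained, is not monotone in $m$ for arbitrarily small $m$, nor for $\epsilon$ close to $1/2$ (where $Q^{-1}(\epsilon)\to 0$, the sign-determining dispersion term collapses, and the leftover corrections cease to be negligible). One therefore has to carry out an SNR-uniform comparison of the positive dispersion contribution against the third-order and $o(1)$ terms, using the feasibility constraints to lower-bound $\gamma$; it is exactly to make that comparison go through that one restricts to $m\ge 84$ (the smallest LTE-A blocklength) and $\epsilon<10^{-1}$, and these bounds are used nowhere else in the argument.
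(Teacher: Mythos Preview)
Your overall reduction is right and matches the paper: after imposing the binding constraints the objective is, up to constants, $\Phi(m_1,\gamma)=m_1\,\mathcal{R}(m_1,\gamma,\nu_1)$ with $\gamma=p_1G_1/\sigma^2$, and one shows monotonicity in each variable separately. But your primary argument for the $m_1$-direction contains a sign slip that undermines it. Writing
\[
\Phi(m_1,\gamma)=m_1\log_2(1+\gamma)\;-\;\log_2 e\,\sqrt{\tfrac{\gamma(\gamma+2)}{(\gamma+1)^2}}\,Q^{-1}(\nu_1)\,\sqrt{m_1}\;+\;\log m_1+\cdots,
\]
the dispersion piece of $\Phi$ is $-c\sqrt{m_1}$ with $c>0$, so its $m_1$-derivative is $-\tfrac{c}{2}\,m_1^{-1/2}$, which is \emph{negative}, not ``the positive quantity $\tfrac12\log_2 e\,Q^{-1}(\nu_1)\sqrt{\cdots}\,m_1^{-1/2}$'' that you claim. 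Thus the contributions to $\partial\Phi/\partial m_1$ are not all nonnegative; positivity must come from $\log_2(1+\gamma)$ dominating this negative dispersion term, which in turn needs a quantitative lower bound on $\gamma$ together with $m_1\ge 84$. That is a different (and more delicate) balancing than the one you describe.

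Your ``equivalently'' alternative---show $\mathcal{R}(\cdot,\gamma,\nu_1)$ is increasing in $m$ on the stated range and then use $\Phi=m_1\mathcal{R}$ as a product of positive increasing functions---is in fact the paper's route, and there the signs come out the way you wanted: in $\partial\mathcal{R}/\partial m_1$ the dispersion term contributes $+\tfrac12\log_2 e\,Q^{-1}(\nu_1)\sqrt{\tfrac{\gamma(\gamma+2)}{(\gamma+1)^2}}\,m_1^{-3/2}>0$, and the competing negative piece is the third-order correction $(1-\log m_1)/m_1^2$. The paper then makes this comparison numerical, using $m_1\ge 84$, $Q^{-1}(\epsilon)\in(2.33,6.00)$, and an extra working assumption $\gamma>1$ (SNR above $0$\,dB) that is not in the proposition statement but is invoked in the proof. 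Your claim that the $p_1$-direction needs ``no restriction on $m$ or $\epsilon$'' is also too strong: from the paper's expression
\[
\frac{\partial \mathcal{R}}{\partial \gamma}\propto (\gamma+1)\sqrt{\gamma^2+2\gamma}-\frac{Q^{-1}(\epsilon)}{\sqrt{m_1}},
\]
one sees that positivity again relies on $\gamma$ bounded away from $0$ \emph{and} on $Q^{-1}(\epsilon)/\sqrt{m_1}$ being small, i.e.\ on the same $m\ge 84$, $\epsilon<10^{-1}$ regime. So keep your alternative argument as the main one, make the SNR lower bound explicit, and drop the direct $\partial\Phi/\partial m_1$ computation (or redo it with the correct sign and the corresponding domination argument).
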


\begin{proof}
Clearly the proposition holds if it is shown that $r= \mathcal{R}(m_1,\gamma_1,\epsilon_{\text{uav}})$ is monotonically increasing in both $\gamma_1$ and $m_1$. 

The first derivatives of $r$ with respect to $\gamma_1$ and $m_1$ are given by
\begin{equation}\label{partialp1}
	\frac{\partial r}{\partial \gamma_1} =  \frac{(\gamma_1+1)\sqrt{\gamma_1^2+2\gamma_1}-Q^{-1}(\epsilon_{\text{uav}})/m_1}{(1+\gamma_1)^2\sqrt{\gamma_1^2+2\gamma_1}} \log_e2 \, ,
\end{equation}
\begin{equation}\label{partialr1}
	\frac{\partial r}{\partial m_1} = \frac{1}{2} Q^{-1}(\epsilon_{\text{uav}})\sqrt{\frac{\gamma_1(\gamma_1+2)}{(\gamma_1+1)^2m_1^3}} \log_e2+\frac{1-\log m_1}{m_1^2} \, .
\end{equation}

We assume that in our scenario the SNR is higher than 0 dB, i.e., $\gamma>1$ and the practical value of $m_1>84$ and $Q^{-1}(\epsilon_{\text{uav}}) \in (2.3263, 5.9978)$. 
For \eqref{partialp1},  if $(\gamma_1+1)\sqrt{\gamma_1^2+2\gamma_1}>1$ and $Q^{-1}(\epsilon_{\text{uav}})/m_1<1$ hold, then $\frac{\partial r}{\partial \gamma_1}>0$. 
For \eqref{partialr1}, if we denote   $c=\frac{1}{2}\log_e2\cdot Q^{-1}(\epsilon_{\text{uav}})\sqrt{\frac{\gamma_1(\gamma_1+2)}{(\gamma_1+1)^2}}$ for short, the minimum of $\frac{\partial r}{\partial \gamma_1}$ is obtained at $m_1'=(\frac{2}{c})^2$. For $m_1\in (m_1',M]$, \eqref{partialr1} is increasing in $m_1$.
 For $m_1>84$, to guarantee     $\frac{\partial r}{\partial m_1}>0$, 
 $c>0.3743$ is required. As for the practical values mentioned above, $\frac{1}{2}\log_e2Q^{-1}(\epsilon_{\text{uav}})\sqrt{\frac{\gamma_1(\gamma_1+2)}{(\gamma_1+1)^2}}\geq 0.4031>0.3743$ holds. We thus have $\frac{\partial r}{\partial m_1}>0$. 
\end{proof}

As the objective function is monotonic increasing with respect to both variables $m_1$ and $p_1$, the problem can be solved by applying the framework of monotonic optimization \cite{zhang2013monotonic}, resulting in a global optimum.

\section{Feasibility Analysis on Guaranteeing the Weighted Throughput of the GU}

{So far, we have characterized the IBL and FBL throughputs under both the NOMA and relaying schemes and provided   corresponding optimal resource allocation designs while guaranteeing the weighted throughput of the GU, i.e., guaranteeing $\beta \mu_0$.} In this section, we provide a feasibility discussion on $\beta$ for both     schemes.

We start with the NOMA scheme and introduce the following proposition
\begin{prop}\label{FBLgeqIBLpro_NOMA}
    Under the NOMA scheme, the feasible range of $\beta$ values is $\beta \in [0,1]$ in both the IBL regime and FBL regime. 
\end{prop}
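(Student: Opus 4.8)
The plan is to pin down the largest throughput that can possibly be guaranteed to the GU under the NOMA scheme for the fixed power budget $p_0$, and to show that this maximum is exactly $\mu_0$, attained precisely when all power is devoted to the GU. First I would invoke Proposition~\ref{strict_bound} to restrict attention to allocations with $p_1+p_2=p_0$, so that the GU's received SINR (obtained from \eqref{muNOMAueIBL} and \eqref{FBLmuUserNOMA} by treating $\mathbf{x}_{\text{uav}}$ as interference) becomes the single-variable expression $\gamma_{\text{gu}}(p_1)=\frac{(p_0-p_1)G_2}{p_1G_2+\sigma^2}$ on $p_1\in[0,p_0]$. Since the numerator is non-increasing and the denominator non-decreasing in $p_1$, the map $p_1\mapsto\gamma_{\text{gu}}(p_1)$ is monotonically decreasing, hence $\gamma_{\text{gu}}(p_1)\le\gamma_{\text{gu}}(0)=\frac{p_0G_2}{\sigma^2}$, with equality iff $p_1=0$. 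The crucial observation is that $\frac{p_0G_2}{\sigma^2}$ is exactly the SNR appearing in the stand-alone GU throughputs $\mu_0^{\rm IBL}$ of \eqref{mu_0_IBL} and $\mu_0^{\rm FBL}$ of \eqref{FBL_GU_requirement}.

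Next I would transfer this SINR bound to the throughput level in each regime. In the IBL regime, $\mu_{\text{N},2}^{\rm IBL}=M\log_2(1+\gamma_{\text{gu}}(p_1))$ is increasing in $\gamma_{\text{gu}}$, so $\mu_{\text{N},2}^{\rm IBL}\le M\log_2(1+\tfrac{p_0G_2}{\sigma^2})=\mu_0^{\rm IBL}$ for every admissible $p_1$, with equality only at $p_1=0$. In the FBL regime, $\mu_{\text{N},2}^{\rm FBL}=M\,\mathcal{R}(M,\gamma_{\text{gu}}(p_1),\nu_2)(1-\nu_2)$, where the factor $(1-\nu_2)$ is common with $\mu_0^{\rm FBL}$; using that $\mathcal{R}(M,\cdot,\nu_2)$ is increasing in its SNR argument over the operating range of interest (this is exactly the monotonicity shown via \eqref{partialp1} in the proof of Proposition~\ref{Monoton_prop3}), one again gets $\mu_{\text{N},2}^{\rm FBL}\le\mu_0^{\rm FBL}$. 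Consequently, in both regimes the guarantee constraint \eqref{NOMAg} can hold only if $\beta\mu_0\le\mu_0$, i.e. $\beta\le1$. Conversely, for any $\beta\le1$ the choice $p_1=0,\ p_2=p_0$ yields $\mu_{\text{N},2}=\mu_0\ge\beta\mu_0$ and is therefore feasible; combined with the trivially feasible $\beta=0$, this establishes the feasible range $[0,1]$.

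The one delicate step is the FBL part: the normal-approximation rate $\mathcal{R}(m,\gamma,\epsilon)$ is not monotone in $\gamma$ for arbitrarily small $\gamma$, since the channel-dispersion term can dominate. I would resolve this exactly as in Proposition~\ref{Monoton_prop3}, by restricting to the practically relevant regime (SNR above $0$\,dB so that $\gamma>1$, blocklength $m\ge84$, and error probability $\epsilon<10^{-1}$), under which $\partial\mathcal{R}/\partial\gamma>0$ is guaranteed and hence $\mathcal{R}(M,\gamma_{\text{gu}},\nu_2)\le\mathcal{R}(M,\tfrac{p_0G_2}{\sigma^2},\nu_2)$. Everything else is elementary monotonicity; in particular, I would note in passing that the UAV-side SIC-feasibility condition reduces to $G_1>G_2$ and thus places no restriction on $\beta$.
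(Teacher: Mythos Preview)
Your argument is correct and follows essentially the same route as the paper: both identify that the GU's SINR under NOMA, $\gamma_2=\frac{p_2G_2}{p_1G_2+\sigma^2}$ with $p_1+p_2=p_0$, is maximized at $p_1=0$ with value $\frac{p_0G_2}{\sigma^2}$, precisely the stand-alone SNR, so that $\beta\le 1$ is forced. You are more careful than the paper in explicitly transferring the SINR bound to the FBL throughput via the monotonicity of $\mathcal{R}$ from Proposition~\ref{Monoton_prop3} and in supplying the converse direction, but the underlying argument is the same.
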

\begin{proof}
    Recall  that for both the IBL and FBL regimes, under the NOMA scheme the GU decodes the received signal based on the SINR $\gamma_2 = \frac{p_2G_2}{p_1G_2+\sigma^2}$. 
    Under the power constraint $p_0=p_1+p_2$, we have $\max\limits_{p_1} \gamma_2  = \frac{p_0G_2}{\sigma^2}$, which is exactly the SNR of the GU when it is being solely served by the network. In such a   case, $p_1=0$ and $p_2=p_0$, which indicates that allocating zero power to the UAV only guarantees $\beta=1$.
    Hence, the proposition is verified. 
\end{proof}

{Proposition~\ref{FBLgeqIBLpro_NOMA} indicates that when additionally serving the UAV (with a positive throughput) under the NOMA scheme, part of the GU's throughput is sacrificed.} On the other hand, the sum throughput of the UAV and the GU is boosted, as the channel from the AP to the UAV is much stronger than the one to the GU.

Next, we discuss the relaying scheme and have the following proposition
\begin{prop}\label{FBLgeqIBLpro_Relay}
    For the relaying scheme, the feasible value of $\beta$ is possible to be higher than~$1$. In such case, a win-win situation is created, i.e., the UAV is additionally served while the GU's performance is improved.  
\end{prop}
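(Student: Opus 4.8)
The plan is to prove the claim constructively, by exhibiting for the relaying scheme an explicit resource allocation that satisfies every constraint of Problem~\eqref{Relaying_JAe} while simultaneously achieving $\beta>1$ and a strictly positive UAV throughput $\mu_{\text{R},1}>0$. The mechanism to exploit is the channel asymmetry of the considered scenario: the AP--UAV link (and possibly the UAV--GU link) is strong while the AP--GU link is weak, so routing the GU's packet through the UAV can both feed the UAV and hand the GU more than $\mu_0$.

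First I would fix the simplest admissible power split, namely equal per-symbol powers $p_1=p_2=p_0$, which makes~\eqref{resourcePe} hold with equality for every blocklength split with $m_1+m_2=M$. Working in the IBL regime, \eqref{RalayIBLue} then gives $\mu^{\rm{IBL}}_{\text{R},2}=m_2\log_2(1+p_0G_3/\sigma^2)$, so that
\begin{equation*}
\beta=\frac{\mu^{\rm{IBL}}_{\text{R},2}}{\mu^{\rm{IBL}}_0}=\frac{m_2}{M}\cdot\frac{\log_2(1+p_0G_3/\sigma^2)}{\log_2(1+p_0G_2/\sigma^2)},
\end{equation*}
while~\eqref{RelayIBLuav} becomes $\mu^{\rm{IBL}}_{\text{R},1}=m_1\log_2(1+p_0G_1/\sigma^2)-m_2\log_2(1+p_0G_3/\sigma^2)$. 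The next step is to read off the two requirements: $\beta>1$ is equivalent to $m_2>\alpha M$ with $\alpha=\log_2(1+p_0G_2/\sigma^2)/\log_2(1+p_0G_3/\sigma^2)$, and, after substituting $m_1=M-m_2$, $\mu^{\rm{IBL}}_{\text{R},1}>0$ is equivalent to $m_2<M/(1+b)$ with $b=\log_2(1+p_0G_3/\sigma^2)/\log_2(1+p_0G_1/\sigma^2)$.

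The core of the argument --- and the step I expect to be the main obstacle --- is to show that the interval $(\alpha M,\,M/(1+b))$ is non-empty, i.e.\ that the single frame budget $m_1+m_2=M$ can carry both enough phase-2 resource for the GU to exceed $\mu_0$ and enough phase-1 resource for the UAV to obtain positive throughput. This reduces to $\alpha(1+b)<1$, which after simplification reads
\begin{equation*}
\log_2\!\Big(1+\tfrac{p_0G_2}{\sigma^2}\Big)\left[\frac{1}{\log_2(1+p_0G_1/\sigma^2)}+\frac{1}{\log_2(1+p_0G_3/\sigma^2)}\right]<1 .
\end{equation*}
I would then argue that this inequality holds whenever $G_1$ and $G_3$ are sufficiently larger than $G_2$ --- precisely the standing assumption $G_1>G_2$ (AP--UAV LoS versus AP--GU NLoS), reinforced by a favourable UAV--GU link $G_3>G_2$, and in particular whenever $\mu_0$ is small. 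Any $m_2$ in this non-empty interval, together with $m_1=M-m_2$ and $p_1=p_2=p_0$, is then a feasible point of Problem~\eqref{Relaying_JAe} with $\beta>1$ and $\mu_{\text{R},1}>0$, which is exactly the asserted win--win situation.

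Finally, I would remark that the FBL case requires no new idea: repeating the construction with the FBL throughputs~\eqref{FBL_GU_requirement} and~\eqref{FBLRelay} replacing their IBL counterparts, the reliability and dispersion penalties modify the numerators and denominators of $\alpha$ and $b$ but leave the channel-asymmetry mechanism intact, so the analogous inequality still holds for sufficiently strong $G_1,G_3$; as noted in the contributions, the GU typically needs fewer resources in the FBL regime, which only enlarges the feasible range of $\beta$.
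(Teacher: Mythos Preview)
Your proposal is correct and follows the same constructive strategy as the paper: exhibit an explicit resource allocation for which both $\beta>1$ and $\mu_{\text{R},1}>0$ hold, and then argue that the channel asymmetry of the scenario makes this possible.

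The difference lies in which degree of freedom is frozen and which is varied. The paper fixes the blocklength split $m_1=m_2=M/2$ and reasons in terms of the first-hop power $p_1$: non-negativity of the UAV throughput together with $\beta>1$ then collapses to the single inequality $\bigl(1+\tfrac{p_1G_1}{\sigma^2}\bigr)>\bigl(1+\tfrac{p_0G_2}{\sigma^2}\bigr)^2$, which is satisfiable whenever $G_1\gg G_2$. You instead fix the per-symbol powers $p_1=p_2=p_0$ and vary the blocklength split $m_2$, arriving at the interval condition $\alpha(1+b)<1$ that involves all three gains $G_1,G_2,G_3$. Your route is a bit more thorough---every constraint of Problem~\eqref{Relaying_JAe} is checked explicitly and a sharp sufficient condition emerges---whereas the paper's argument is terser but leaves the verification of the GU constraint (that a feasible $p_2$ actually delivers $\beta\mu_0$ in phase two within the energy budget) implicit. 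For the FBL extension both arguments are essentially the same: the paper formalises as Lemma~\ref{FBLgeqIBLpro_Relay2} the observation you invoke informally, namely that guaranteeing the GU's target can cost fewer resources in the FBL regime, so the feasible range of $\beta$ can only widen.
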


\begin{proof}
We first show that    $\beta\geq 1$ possibly holds  under the relaying scheme by considering the following example. 
We  consider
a relaying scheme
 with $m_1=m_2=M/2$.
Note that the IBL throughput to the UAV is non-negative, i.e., $\frac{M}{2}{\log _2}\left(1 + \frac{{{p_1}{G_1}}}{{{\sigma ^2}}}\right) - \beta M{\log _2}\left(1 + \frac{{{p_0}{G_2}}}{{{\sigma ^2}}}\right) \ge 0$.
Hence, we have 
\begin{equation}
 \begin{split} 
& \beta  > 1 \\
 \Leftrightarrow~~  &{\log _2}\left(1 + \frac{{{p_1}{G_1}}}{{{\sigma ^2}}}\right) - {\log _2}{\left(1 + \frac{{{p_0}{G_2}}}{{{\sigma ^2}}}\right)^2} > 0 \\
 \Leftrightarrow~~ & \left(1 + \frac{{{p_1}{G_1}}}{{{\sigma ^2}}}\right) > {\left(1 + \frac{{{p_0}{G_2}}}{{{\sigma ^2}}}\right)^2}.
\end{split} 
\end{equation}
Note that the channel from the AP to the UAV is generally   LoS and the link from the AP to the selected GU (in pair with the UAV) is generally weak, i.e., it is more likely that  ${G_1} \gg {G_2}$. Hence, there exist  feasible  $p_1$ and $p_0$, which satisfy    $\left (1 + \frac{{{p_1}{G_1}}}{{{\sigma ^2}}}\right) > {\left(1 + \frac{{{p_0}{G_2}}}{{{\sigma ^2}}}\right)^2}$.

The possibility of  $\beta\geq 1$ can be also proved in the IBL regime   
based on the following Lemma~\ref{FBLgeqIBLpro_Relay2} which provides that  if $\beta>1$ is   guaranteed in the IBL regime, it also possibly holds in the FBL regime.  
\end{proof}

\begin{lem}\label{FBLgeqIBLpro_Relay2}
    In the relaying scheme with   given $\beta$,   $m_1$ and $m_2$, under certain conditions guaranteeing GU's transmission quality in the FBL regime requires less transmit power  $p_2$ to GU  than guaranteeing it in the IBL regime. 
\end{lem}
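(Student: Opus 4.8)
The plan is to reduce the lemma to a comparison of the two receive SNR levels the GU needs in the second (relaying) phase, and then to isolate the ``certain conditions'' as an explicit inequality between the finite-blocklength penalty terms of the two regimes. In the relaying scheme the GU is served solely over the UAV--GU link during the $m_2$ symbols of the second phase, and $p_2=\gamma_2\sigma^2/G_3$ is a strictly increasing bijection between that power and the resulting SNR $\gamma_2$; hence it is equivalent to show $\gamma_2^{\rm{FBL}}<\gamma_2^{\rm{IBL}}$, where $\gamma_2^{\rm{IBL}}$ and $\gamma_2^{\rm{FBL}}$ are the SNRs defined by $\mu^{\rm{IBL}}_{\text{R},2}=\beta\mu^{\rm{IBL}}_0$ and $\mu^{\rm{FBL}}_{\text{R},2}=\beta\mu^{\rm{FBL}}_0$, respectively, for the common $\beta,m_1,m_2$.

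First I would put both defining equations in a common form. With $\gamma_0\triangleq p_0G_2/\sigma^2$ and $g(x)\triangleq\log_2(1+x)$, the equality constraint~\eqref{guaranteeUEJLe} in the IBL regime reads $g(\gamma_2^{\rm{IBL}})=\tfrac{\beta M}{m_2}\,g(\gamma_0)$. In the FBL regime, using $\mu^{\rm{FBL}}_{\text{R},2}=m_2\,\mathcal{R}(m_2,\gamma_2^{\rm{FBL}},\epsilon_{{\text{R},2}^{\ast}})(1-\nu_1)(1-\epsilon_{{\text{R},2}^{\ast}})$ with $\epsilon_{{\text{R},2}^{\ast}}=1-\tfrac{1-\nu_2}{1-\nu_1}$, so that $(1-\nu_1)(1-\epsilon_{{\text{R},2}^{\ast}})=1-\nu_2$, together with $\mu^{\rm{FBL}}_0=M\,\mathcal{R}(M,\gamma_0,\nu_2)(1-\nu_2)$, the factors $(1-\nu_2)$ cancel and the constraint becomes $\mathcal{R}(m_2,\gamma_2^{\rm{FBL}},\epsilon_{{\text{R},2}^{\ast}})=\tfrac{\beta M}{m_2}\,\mathcal{R}(M,\gamma_0,\nu_2)$; this determines $\gamma_2^{\rm{FBL}}$ uniquely since $\mathcal{R}(m,\cdot,\epsilon)$ is strictly increasing in the SNR over the practical range by Proposition~\ref{Monoton_prop3}. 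Writing $\mathcal{R}(m,\gamma,\epsilon)=g(\gamma)-P(m,\gamma,\epsilon)$ with the finite-blocklength penalty
\begin{equation}\label{eq:penaltydef}
P(m,\gamma,\epsilon)\triangleq\log_2 e\,\sqrt{\tfrac{V(\gamma)}{m}}\,Q^{-1}(\epsilon)-\tfrac{\log m}{m},\qquad V(\gamma)\triangleq\tfrac{\gamma(\gamma+2)}{(\gamma+1)^2},
\end{equation}
(neglecting the vanishing $o(\cdot)$ term) and eliminating $g(\gamma_0)$ via the IBL relation, the FBL equation rearranges to
\begin{equation}\label{eq:keyidentity}
g(\gamma_2^{\rm{FBL}})=g(\gamma_2^{\rm{IBL}})-\frac{\beta M}{m_2}\,P(M,\gamma_0,\nu_2)+P(m_2,\gamma_2^{\rm{FBL}},\epsilon_{{\text{R},2}^{\ast}}).
\end{equation}
Because $g$ is strictly increasing, \eqref{eq:keyidentity} shows that $p_2^{\rm{FBL}}<p_2^{\rm{IBL}}$ \emph{if and only if}
\begin{equation}\label{eq:lemcond}
P\!\left(m_2,\gamma_2^{\rm{FBL}},\epsilon_{{\text{R},2}^{\ast}}\right)<\frac{\beta M}{m_2}\,P(M,\gamma_0,\nu_2),
\end{equation}
that is, the penalty paid on the short second-phase codeword must be smaller than the (scaled) penalty already discounted from the GU's stand-alone target throughput in the FBL regime; this is exactly the ``certain conditions'' referred to in the statement.

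To turn \eqref{eq:lemcond} into a condition free of the unknown $\gamma_2^{\rm{FBL}}$, I would bound the dispersion factor uniformly: $V(\gamma)<1$ for every finite $\gamma$, so $P(m_2,\gamma_2^{\rm{FBL}},\epsilon_{{\text{R},2}^{\ast}})<\log_2 e\,Q^{-1}(\epsilon_{{\text{R},2}^{\ast}})/\sqrt{m_2}$; and, under the operating assumption $\gamma_0>1$ used in Proposition~\ref{Monoton_prop3}, $V(\gamma_0)=1-(\gamma_0+1)^{-2}>\tfrac34$, hence $P(M,\gamma_0,\nu_2)\ge\tfrac{\sqrt3}{2}\log_2 e\,Q^{-1}(\nu_2)/\sqrt{M}-\tfrac{\log M}{M}$, which is positive in the practical regime $m\ge84$, $\epsilon<10^{-1}$. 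Substituting these two bounds into \eqref{eq:lemcond} yields the sufficient condition
\begin{equation}\label{eq:lemsuff}
\frac{\log_2 e\,Q^{-1}(\epsilon_{{\text{R},2}^{\ast}})}{\sqrt{m_2}}\;\le\;\frac{\beta M}{m_2}\left(\frac{\sqrt3\,\log_2 e\,Q^{-1}(\nu_2)}{2\sqrt{M}}-\frac{\log M}{M}\right),
\end{equation}
an explicit relation among $m_2,M,\beta,\nu_1,\nu_2$ that holds, for example, whenever $\beta$ is not too small and $m_2$ is not large relative to $M$; under it the lemma follows, and Proposition~\ref{FBLgeqIBLpro_Relay} may then invoke it to carry the feasibility of $\beta\ge1$ over from the IBL to the FBL regime (the saved power is re-routed to $p_1$ through \eqref{resourcePe}, while the target $\beta\mu^{\rm{FBL}}_0\le\beta\mu^{\rm{IBL}}_0$ is simultaneously lowered). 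The main obstacle is precisely the circular dependence of the left-hand side of \eqref{eq:lemcond} on $\gamma_2^{\rm{FBL}}$; the crude but uniform bound $V<1$ eliminates it at the cost of a somewhat conservative condition, and one must keep track of the sign of $P(M,\gamma_0,\nu_2)$ — equivalently, keep $\nu_2$ in the small-error regime already assumed — so that the inequality points the right way.
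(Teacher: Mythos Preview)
Your argument is correct, but it follows a genuinely different route from the paper's. The paper does not derive an equivalent or sufficient condition in the generality you do; instead it simply exhibits one concrete special case in which the claim goes through: it sets $\beta=1$, $m_1=m_2=M/2$, assumes $\nu_1\ll\nu_2$ so that $\epsilon_{{\text{R},2}^{\ast}}\approx\nu_2$ and hence the same $Q^{-1}$ factor appears on both sides, and further assumes all SNRs exceed $1$. Writing $t_0=1+\gamma_0$, $t_2=1+\gamma_2^{\rm IBL}$, $t_2'=1+\gamma_2^{\rm FBL}$, the FBL constraint becomes $\log_2 t_2'=2\log_2 t_0-A\bigl(2\sqrt{1-1/t_0^2}-\sqrt{1-1/t_2'^{\,2}}\bigr)-B$ with $A=Q^{-1}(\nu_2)\log_2 e$, $B=\tfrac{\log m}{m}$, and one only needs $2\sqrt{1-1/t_0^2}\ge\sqrt{1-1/t_2'^{\,2}}$, which is immediate since $t_0,t_2'>2$. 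What your approach buys is an explicit, parameter-level sufficient condition~\eqref{eq:lemsuff} valid for general $\beta$, $m_2$, and distinct $\nu_1,\nu_2$, obtained via the clean ``penalty-comparison'' identity~\eqref{eq:keyidentity}; the price is the crude dispersion bounds $V<1$ and $V(\gamma_0)>3/4$, which make the condition conservative. What the paper's approach buys is brevity and no bounding at all, since in the symmetric case $m_1=m_2$ and $\epsilon_{{\text{R},2}^{\ast}}\approx\nu_2$ the two penalty terms share the same $A$ and $B$, so the comparison reduces to a trivial inequality between the $\sqrt{V}$ factors; the price is that it proves only a single instance rather than a family of ``certain conditions.''
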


\begin{proof}
We prove Lemma~\ref{FBLgeqIBLpro_Relay2} by discussing the special case in which we have     $\beta = 1$, $m_1=m_2=M/2=m$ and the reliability constraint of UAV is much lower than the GU, i.e., $\nu_1 \ll \nu_2$. In addition, we consider a reliable transmission scenario where the SNRs of all links are higher than 1. 


In the IBL regime, according to \eqref{RalayIBLue},  the throughput requirement of the GU is  $\log_2(t_2) = 2\log_2{t_0}$,  where we have defined
 $t_0=1+\frac{p_0G_2}{\sigma^2}$. 
Then, the IBL throughput of the UAV is obtained by
\begin{equation}
   	c = \log_2(t_1)-\log_2(t_2) =\log_2\frac{t_1}{t_2},  
\end{equation}
 where $t_1=1+\frac{p_1G_1}{\sigma^2}$ and $t_2=1+\frac{p_2G_3}{\sigma^2}$.

In the FBL regime, the GU's throughput requirement is given in \eqref{FBL_GU_requirement}. 
Noting that $\nu_1 \ll \nu_2$,  we have $\epsilon_{\rm{R},2}=\nu_2- \nu_1 + \nu_1\epsilon_{\rm{R},2} \approx \nu_2$, i.e., 
$Q^{-1}(\nu_2) \approx Q^{-1}(\epsilon_{\rm{R},2}) $.
We define   
$A =  Q^{-1}(\nu_2) \log_2 e \approx Q^{-1}(\epsilon_{\rm{R},2}) \log_2 e$   and $B = \frac{\log_2 m_1}{m_1} = \frac{\log_2 m_2}{m_2}$.
In addition, we denote $t_2'=1+\frac{p_2'G_3}{\sigma^2}$ and $t_1'=1+\frac{p_1'G_1}{\sigma^2}$ where $p_1'$ and $p_2'$ are the power allocation results in the FBL regime. Then, the GU's throughput requirement   given in \eqref{FBL_GU_requirement} can be represented by
\begin{equation}
 	\log_2t_2'-A\sqrt{1-\frac{1}{(t_2')^2}} - B = 2 \left(\log_2t_0-A\sqrt{1-\frac{1}{t_0^2}} - B\right).
\end{equation}
Hence, we have
\begin{equation}
 	\log_2t_2' = 2 \log_2t_0 - A\left(2\sqrt{1-\frac{1}{t_0^2}}-\sqrt{1-\frac{1}{t_2'^2}}\right) - B.
\end{equation}

Noting that SNRs are higher than 1, we have both $t_0>2$ and $t_2'>2$ hold. Therefore $2\sqrt{1-\frac{1}{t_0^2}}-\sqrt{1-\frac{1}{t_2'^2}}\geq 0$ by transforming into $3\geq \frac{4}{t_0^2} - \frac{1}{t_2'^2}$, which is obviously valid. Thus $\log_2t_2' < \log_2t_2$ holds, which is equivalent to $p_2'<p_2$. This indicates that more resources could be allocated for the first hop, i,e, the serving quality of UAV is better than that calculated in the IBL regime. 
\end{proof}

\section{Simulation Results and Discussion}

In this section, we provide our simulation results. 
We first validate our analytical model and subsequently evaluate the IBL and FBL performances of the considered network. The   simulation results are conducted under the following parameter setups. First, we consider a 3-dimensional topology where the AP is deployed at point $(0,0,20)$, while the UAV and GU positions are set to $(100,0,100)$ and $(700,0,0)$. 
In addition, the total transmit power  is   $p_0 = 1 W =30 \rm{dBm}$ while  the noise power 
is $\sigma^2 = -80 \rm{dBm}$. Moreover, we set the frame length to $M=400$ symbols. Finally, for the analysis in the FBL regime, the target error probabilities of the transmissions to the UAV and the GU are set to $\nu_1=10^{-4}$ and $\nu_2=10^{-3}$, respectively. 
Moreover, noting that the    AP-GU link, the AP-UAV link, and the UAV-GU link have different    probabilities of LoS, in the simulation we consider the following path loss formulations. 
We adopt the general path loss model from~\cite{rappaport1996wireless}, which  is given by  
\begin{equation}
	L_i = L(d_0) + 10\alpha_i \cdot \log\left(\frac{d_i}{d_0}\right) \, , 
\end{equation}
where $L(d_0)$ is determined by the free space path loss at reference distance $d_0 = 100$m. In addition,  $\{\alpha_i\}$ for $i = 1,2,3$  are the path loss exponents, which are potentially different for the three links with different probabilities of LoS. 
In particular, we adopted the setups from~\cite{rappaport1996wireless} and set  $\alpha_1 = 2$ for the AP-UAV link (with LoS) and $\alpha_2=3.5$ for the AP-GU link (without LoS). 
Moreover,  following the model in~\cite{azari2018ultra}, we have the following path loss exponent for the UAV-GU link:
	$\alpha_3 = a_1\cdot P(\theta) + b_1$,
where $P(\theta) = \frac{1}{1+a_2e^{-b_2(\theta-a_2)}}$ is the probability of the LoS between UAV and GU, as introduced in \cite{al2014optimal}.  
In addition, 
we set $a_1 = -1.5$, $a_2 = 2$, $a_2=9.61$, and $b_2=0.16$, according to the rural area setup in~\cite{alzenad20183}.

\subsection{Validations}
In this subsection, we validate our analytical model via simulation and numerical results that depict the instantaneous performances within a single frame. 
\begin{figure}[t]
    \centering
    \subfigure[The IBL throughput of UAV  ($\beta=0.5$).]{
    \label{fig:validprop1_beta4}
    \includegraphics[width=0.95\linewidth, trim= 0 0 0 0]{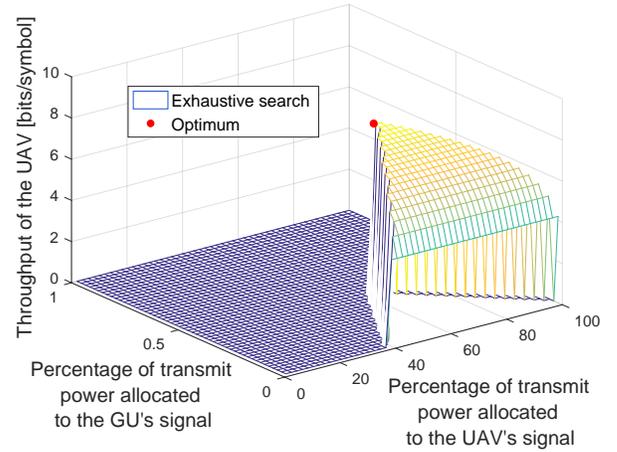}
    }
    \subfigure[Power allocation with different setups of $\beta$]{
    \label{fig:validprop1_betaMul}
    \includegraphics[width=0.95\linewidth, trim= 0 0 0 0]{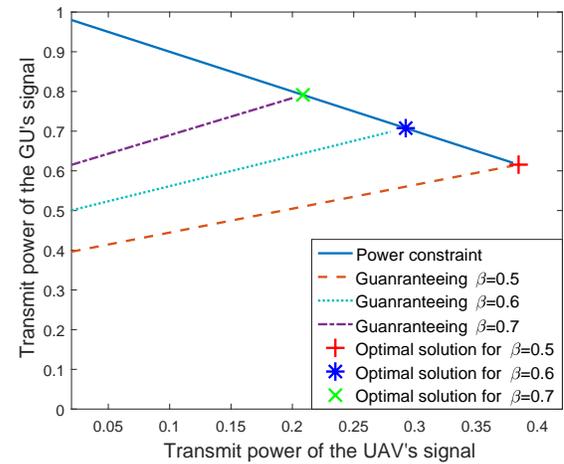}
    }
    \caption{Validation of optimal method by comparing with exhaustive search.}
    \label{fig:validprop1}
    \vspace{-.3cm}
\end{figure}
\begin{figure}[t]
    \centering
    \includegraphics[width=0.95\linewidth, trim = 0 0 0 0]{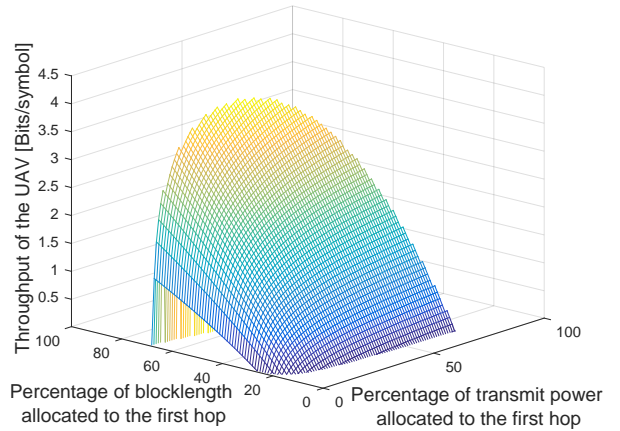}
    \caption{Validation of convexity of the relaying scheme in the IBL regime.}
    \label{fig:validprop2}
\end{figure}

We start with Fig.~\ref{fig:validprop1} to show the impact of power allocation on the UAV throughput.
{ The figure has two parts, i.e.,  Fig.~\ref{fig:validprop1_beta4} and Fig.~\ref{fig:validprop1_betaMul}. } 
 In particular,  Fig.~\ref{fig:validprop1_beta4} provides the all the feasible values of   IBL throughput of the UAV under the NOMA scheme, i.e.,  the objective in Problem \eqref{NOMA_PA_or},
while satisfying the inequality  constraints \eqref{NOMAg} and  \eqref{NOMAp}. 
In addition, the optimal solution to Problem \eqref{NOMA_PA_or} obtained via exhaustive search is also provided in the figure as a red point. 
The two cut-surfaces crossing the  optimal solution point  actually represent the equality cases of \eqref{NOMAg} and  \eqref{NOMAp}. 
Clearly, Fig.~\ref{fig:validprop1_beta4} shows that under the scenario that $\beta=0.5$, the optimal solution satisfies       \eqref{NOMAg} and  \eqref{NOMAp} with equality. Moreover, a set of corresponding results are provided in Fig.~\ref{fig:validprop1_betaMul}, where different values of $\beta$ are considered. Again, it is observed that the optimal solutions are exactly the crossing points where the  power constraint and the GU's throughput constraint hold with equality, verifying Proposition~1.

Next, we consider the relaying scheme in the IBL regime. 
In Fig.~\ref{fig:validprop2}, we show the relationship among the IBL throughput of UAV,  blocklength $m_1$ and transmit power $p_1$,   when $\beta = 1.0$. 
Note that only the feasible vaules of the UAV's throughput are visible, i.e. the blocklength and power allocation must satisfy the constraints in \eqref{Relaying_JA}. 
Clearly, it is observed that the UAV throughput in the IBL regime is concave in $m_1$ and $P_1$, confirming the characterization in Proposition~\ref{convexity_prop2}.

In addition, in Fig.~\ref{fig:validprop3_BL} we investigate the relationship between the throughput of the UAV with the blocklength $m_1$, while different values of the transmit power $p_1$ are considered.
{ Fig.~\ref{fig:validprop3_BL} shows that the UAV throughput is monotonically increasing in the blocklength of the first hop within the feasible set. Note that the sudden cut after each curve achieving its maximum value is due to the feasibility criterion, i.e.  the  constraint to guarantee GU's performance is violated   if a blocklength longer than that required at the maximum point is allocated.}
In parallel, the impact of transmit power $p_1$ on the throughput of the UAV is given in Fig.~\ref{fig:validprop3_P} for different values of the blocklength $m_1$. 
{It is observed that the UAV throughput is monotonically increasing in the transmit power allocated to the first hop within the feasible set.
In other words, 
Fig.~\ref{fig:validprop3} confirms    Proposition~\ref{Monoton_prop3} that the UAV throughput under the relaying schemes is a monotonic function with respect to 
 $p_1$ and $m_1$ within the corresponding feasible sets.  }

\begin{figure}[t]
    \centering
    \subfigure[Monotonic increasing throughput in blocklength]{
    \label{fig:validprop3_BL}
    \includegraphics[width=0.95\linewidth, trim= 0 0 0 0]{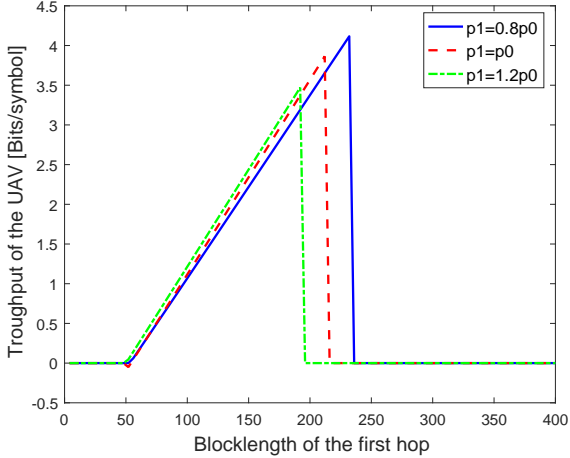}
    }
    \subfigure[Monotonic increasing throughput in power]{
    \label{fig:validprop3_P}
    \includegraphics[width=0.95\linewidth, trim= 0 5 0 -10]{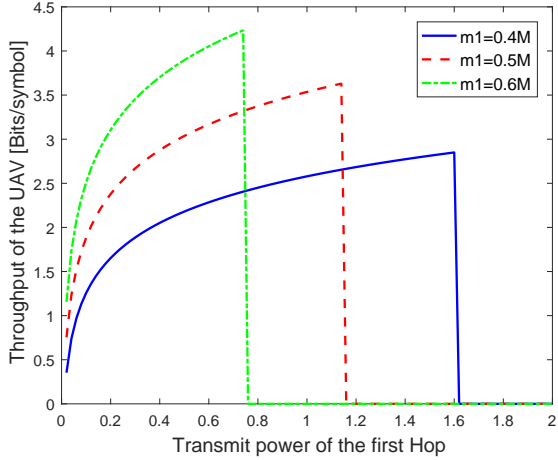}
    }
    \caption{Validation of monotonicity of relaying scheme in the FBL regime.}
    \label{fig:validprop3}
      \vspace{-.27cm}
\end{figure}
\begin{figure}[t]
	\centering
	\includegraphics[width = 0.84\linewidth, trim= 10 10 0 -23]{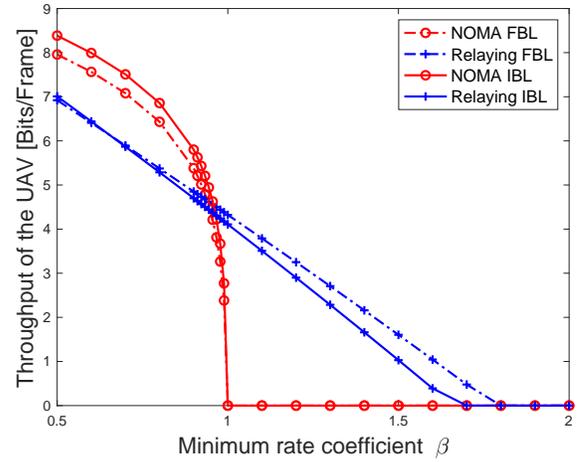}
	\caption{Performance comparison between NOMA and Relaying in both IBL and FBL regimes for GU at $700$ m.}
	\label{fig:EdgeUser}
\end{figure}

Moreover, Fig.~\ref{fig:EdgeUser} studies the throughput of the UAV while comparing the NOMA and the relaying schemes in both the IBL and the FBL regimes. In the simulation, we vary the minimum rate coefficient     $\beta$ from 0.5 to 2. The throughput of the UAV under the NOMA scheme is depicted using circle markers, while the performance under the relaying scheme is shown with a plus marker. The solid lines represent the performance in the IBL regime and the dashed line is indicating the performance in the FBL regime. 

It is evident in Fig. \ref{fig:EdgeUser} that the feasible ranges of the  minimum rate coefficient  are different in the two schemes.
{In particular, the NOMA curves drop rapidly as $\beta$ gets close to $1$.} This confirms   Proposition~\ref{FBLgeqIBLpro_NOMA} that  the NOMA scheme only serves the UAV with $\beta \in [0,1]$ in both the IBL and FBL regimes. 
On the other hand, the relaying scheme could offer more throughput ($\beta > 1$) to the GU and serve the UAV simultaneously, i.e., offer a win-win strategy, which matches our characterization in Proposition~\ref{FBLgeqIBLpro_Relay}.  
In addition, we could observe from Fig. \ref{fig:EdgeUser} that under the NOMA scheme, the throughput of the UAV in the FBL regime is smaller than the throughput in the IBL regime. 
More interestingly, the FBL throughput under the relaying scheme is higher than the IBL throughput when $\beta$ is higher than 0.6. This   verifies Proposition~\ref{FBLgeqIBLpro_Relay2} that under the relaying scheme,  it is possible that less resource is required in the FBL regime (than in the IBL regime) to guarantee the GU's transmission quality, i.e., more resources could be spent for the UAV.   
\begin{figure}[t]
	\centering
	\includegraphics[width = 0.95\linewidth, trim= 5 7 0 21]{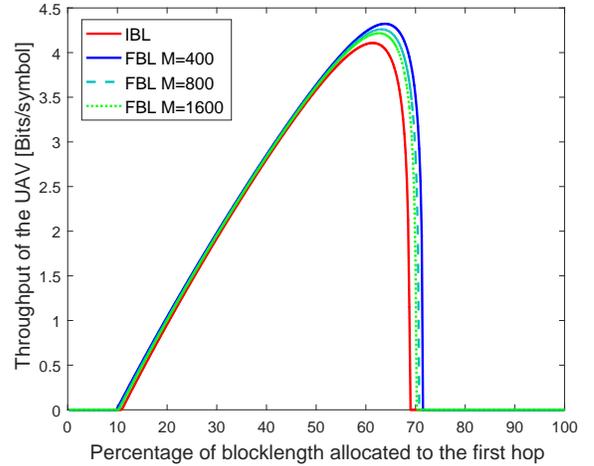}
	\caption{The UAV throughput in the IBL and the FBL regimes. In the figure, we set $\beta=1$.}
	\label{FBLgeqIBL}
\end{figure}

Next, we      show the impact of  $m_1$ on the throughput of the UAV  in Fig.~\ref{FBLgeqIBL} where     results in both the IBL and FBL regimes are provided.  
The difference between  Fig.~\ref{FBLgeqIBL}  and Fig.~\ref{fig:validprop3_BL} is that the transmit power $p_1$ in Fig.~\ref{FBLgeqIBL} is optimally chosen.
It is observed that the optimal blocklength of the first hop in the FBL regime   is shorter than the one in the IBL regime, which again confirms Lemma~\ref{FBLgeqIBLpro_Relay2}.
Moreover, this blocklength difference between the two regimes is getting smaller as the total frame length increases. 

To discuss the applicability of the model under the proposed schemes, we place another GU at $(400,0,0)$, which is closer to the AP than the setup in Fig.~\ref{fig:EdgeUser}, where the single GU is located at $(700,0,0)$. The throughput results are illustrated in Fig. \ref{MiddleRange}. 
\begin{figure}[t]
	\centering
	\includegraphics[width = 0.88\linewidth,trim= 0 15 0 -5]{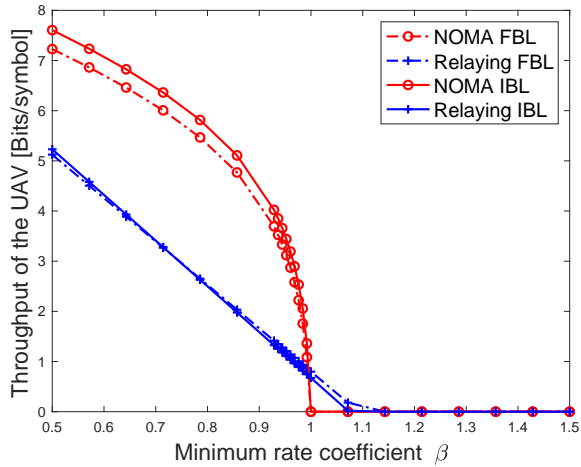}
	\caption{Performance comparison between NOMA and Relaying in both IBL and FBL regime for GU at $400$ m.}
	\label{MiddleRange}
\end{figure}
In contrast to Fig.~\ref{fig:EdgeUser}, the win-win region ($\beta>1$ and UAV throughput is positive) of relaying  is much smaller in Fig.~\ref{MiddleRange}. 
On the other hand, for $\beta<0.99$, the NOMA scheme outperforms the relaying scheme in both the IBL and FBL regimes. In other words, when the UAV is paired with the GU which has a good channel link to the AP, the NOMA scheme is generally more preferred in terms of the total throughput.

\subsection{Evaluation}
In this subsection, we evaluate the average system performance over channel fading. 
In particular, we have the following setup on the channel fading: (A) for the AP-UAV channel with LoS, we apply Rician fading with non-centrality parameter as $1$;
(B) for the UAV-GU channel, we apply Rician fading with non-centrality parameter as the probability of LoS;
(C) finally we apply Rayleigh fading model for   the AP-GU channel without LoS.

\subsubsection{User Selection}
Fig.~\ref{Distance_evaluation} illustrates the impact of the GU's distance (to the AP) on the UAV throughput performance. Note that the NOMA scheme never works at $\beta>1$. Hence,  we set the  minimum rate coefficient $\beta = 0.95$, and evaluate the performance of the NOMA and relaying schemes    in both the IBL and FBL regimes. 
\begin{figure}[t]
	\centering
	\includegraphics[width = 0.95\linewidth,trim= 0 15 0 10]{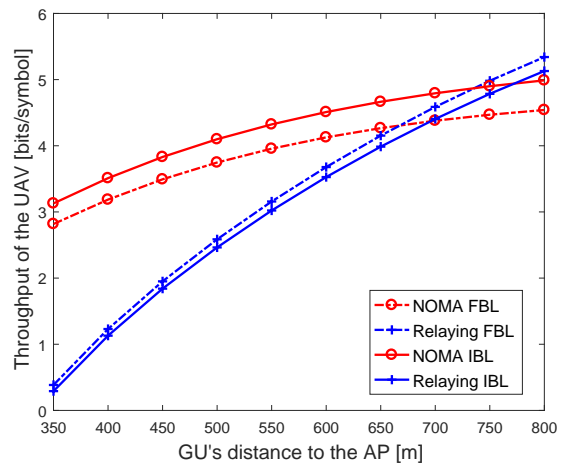}
	\caption{Performance evaluation in different GU's distance}
	\label{Distance_evaluation}
\end{figure}
The results are consistent with  Fig.~\ref{fig:EdgeUser} and Fig.~\ref{MiddleRange} in that the NOMA scheme is more preferred when the GU is close to the AP. 
Moreover, Fig.~\ref{Distance_evaluation} also confirms   the results observed in   Fig.~\ref{fig:EdgeUser} to Fig.~\ref{MiddleRange}    that in the FBL regime relaying provides a relatively higher throughput to UAV than in the IBL regime.

\subsubsection{The Impact of FBL Regime in Resource  Allocation}

In the following, we investigate how the optimization problem in the FBL regime differs from that in the IBL regime, and  what  the performance loss is if  we directly apply the allocation parameters obtained in the IBL regime into practice. 
It should be pointed out that the optimal allocation parameter of the NOMA scheme in the IBL regime is not feasible in the FBL regime. It is due to the fact that the power $p_2$ obtained in the IBL regime is not able to  guarantee the GU's reliability and throughput requirement in the FBL regime.
Therefore, we only 
  compare the IBL throughput with the FBL throughput under the relaying scheme. 
In particular, three types of results are provided: (A) IBL throughput: the optimal throughput of UAV purely according to the design  in the IBL mode; (B) FBL throughput: the optimal throughput of UAV based on the  proposed design in the FBL regime; (C) FBL throughput with IBL blocklength: we obtain the optimal blocklength solution in the   IBL regime (i.e., using the IBL throughput metrics), and calculate the FBL performance based on these blocklength (BL) results (where  the transmit power is reallocated to satisfy the GU's reliability and throughput requirements).

\begin{figure}[b]
	\centering
	\includegraphics[width = 0.95\linewidth,trim= 0 10 0 25]{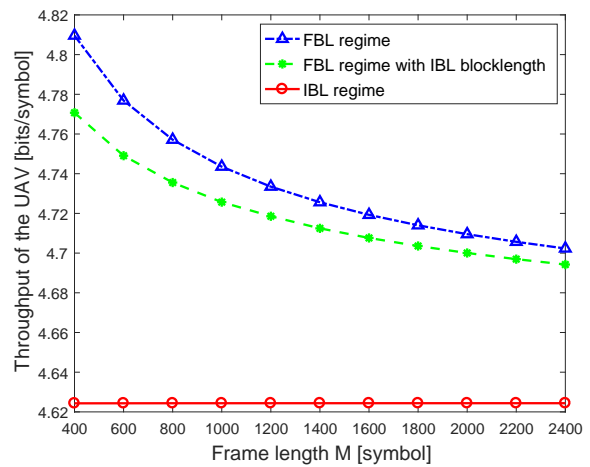}
	\caption{Throughput comparison under the relaying scheme with a varying frame length.}
	\label{BLAnalysis}
\end{figure}
The impact of the frame length $M$ on the optimal throughput of the UAV 
is shown in Fig. \ref{BLAnalysis}. 
Significant performance difference between the designs in the two regimes are observed clearly from the figure. On the one hand, the FBL throughput of UAV is much higher than the IBL throughput, i.e., the IBL model is inaccurate when the frame length $M$ is relatively short.   On the other hand, directly applying the IBL optimal solution to a system operating with FBL codes also introduces considerable performance loss, i.e., the gap between the FBL throughput and FBL throughput with IBL BL (i.e., using the BL results obtained by considering the IBL throughput characterizations).

Moreover, we study the impact of the GU's reliability requirement on the UAV throughput in Fig.~\ref{ErrAnalysis}. 
\begin{figure}[t]
	\centering
	\includegraphics[width = 0.95\linewidth, trim= 0 10 0 10]{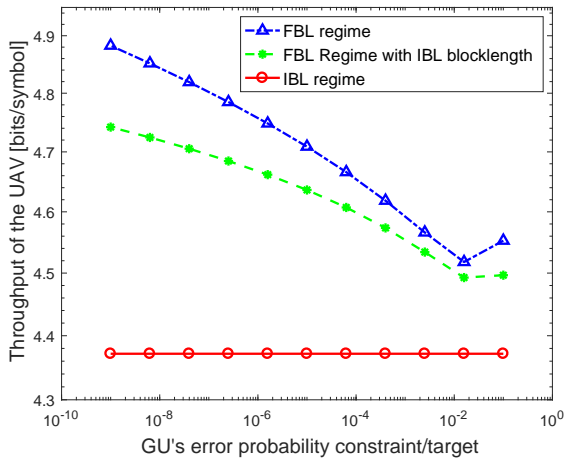}
	\caption{Throughput comparison under the relaying scheme,  where we vary the GU's error probability constraint from $10^{-9}$ to $10^{-1}$.}
	\label{ErrAnalysis}
\end{figure}
Again, significant performance gaps are observed between the FBL and IBL regimes, which confirms that it is necessary and essential to   provide the FBL  design (under  relaying scheme)    for a UAV network operating with short blocklengths. 
Moreover, this necessity is more obvious when the GU has a stringent reliability constraint, i.e.,  the performance gaps between  the curves are more considerable when the  GU's error probability constraint becomes lower. 
\begin{figure}[b]
	\centering
	\includegraphics[width = 0.82\linewidth,trim= -5 10 12 4]{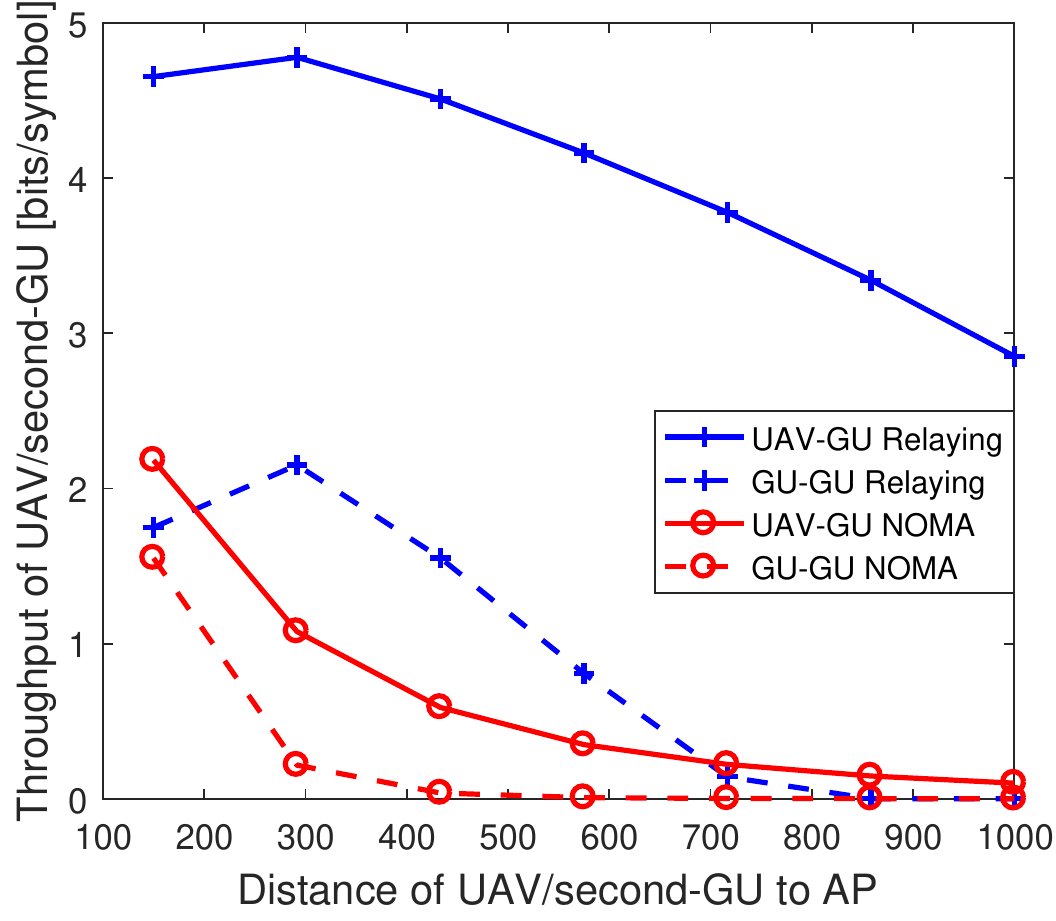}
	\caption{Throughput comparison between the UAV-GU pair and the GU-GU pair.}
	\label{throughput_comp}
\end{figure}
\begin{figure}[b]
	\centering
	\includegraphics[width = 0.90\linewidth, trim= 5 12 0 0]{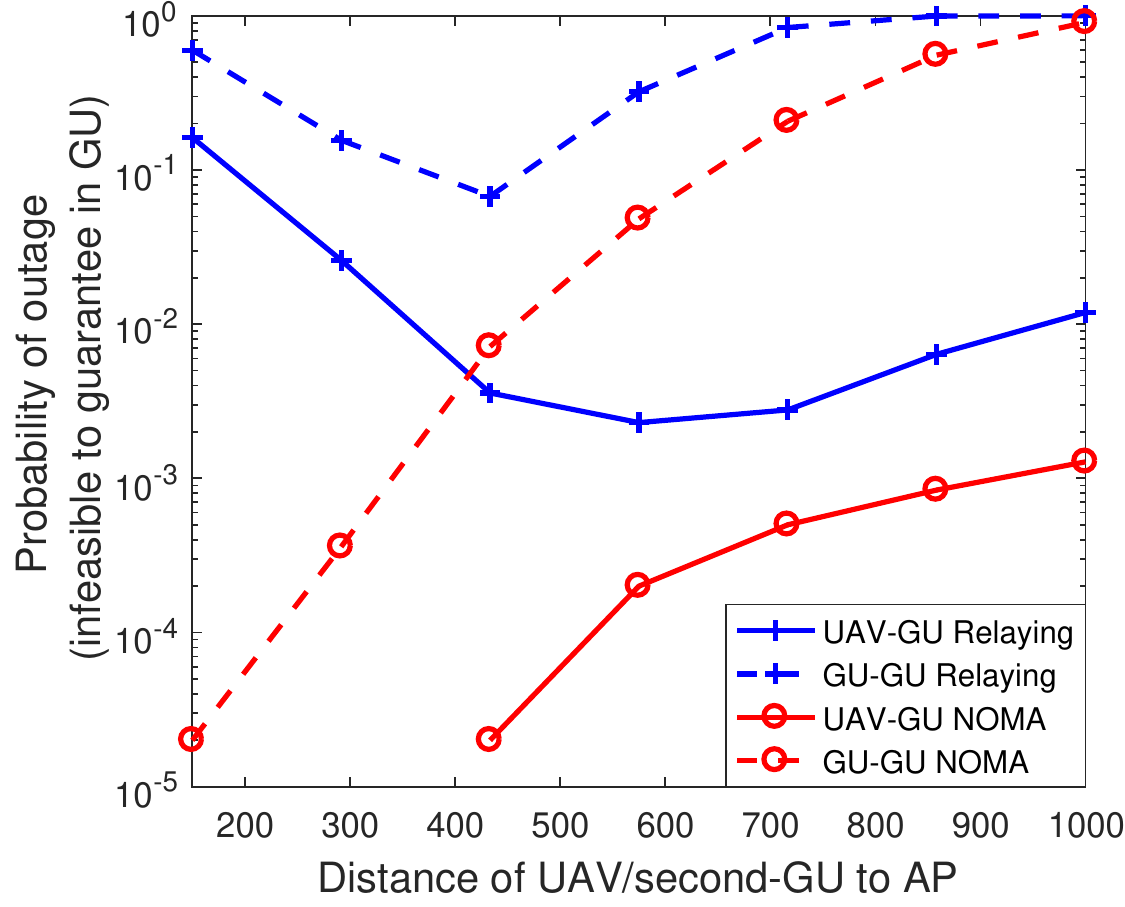}
	\caption{The probability that the first   GU's transmission quality cannot be guaranteed.}
	\label{outage_comp}
\end{figure}
 
Finally, although the designs (for optimal resource allocations for a UAV-GU pair under relaying and NOMA schemes) in this work are motivated by utilizing the more favorable channel conditions (with a high probability of LoS) in links from/to the UAV and utilizing the channel quality difference  between the AP-UAV and AP-GU links,  it should be pointed out that these designs can be also applied  for the scenario with a pair of   GUs. 
In the following, we compare   these two types of user pairs, i.e., UAV-GU and GU-GU, to show how our designs benefit the UAV connected scenario.. We set the locations of the AP and the first (primary) GU to $(0,0,20)$ and $(800,0,0)$, respectively.  In addition, we set the locations of the UAV and the second GU to $(\hat d,0,100)$ and  $(\hat d,0,0)$, i.e.,  letting the UAV and the second GU have the same ground  location/projection,   and vary $\hat d$ from 150 m to 1000 m.   Moreover, we   set the  minimum rate coefficient $\beta = 0.95$. 

The throughput results of the comparison are provided in Fig.~\ref{throughput_comp}.
Clearly, owing to the high probability of LoS, the UAV achieves a significantly higher throughput via our designs  than the  second GU. Moreover, relaying has   better performances for both the UAV-GU and GU-GU cases in comparison to NOMA.

The corresponding outage probability results are shown in  Fig.~\ref{outage_comp}, where the outage probability represents the probability that the first (primary) GU's transmission quality cannot be guaranteed, i.e., the probability that   the UAV or the second GU cannot be additionally serviced  by the AP. 
 It can be observed that the outage probability of the second GU is significantly higher than that of the UAV. In particular, in order to guarantee the first GU's transmission quality, the GU-GU pair is not able to provide a reliable connection for the second GU when this GU is a relatively far from the AP, e.g., when the distance of the second GU to the AP is longer than 600, the outage probability of the second GU becomes higher than $10^{-1}$. On the other hand, the UAV-GU pair introduces a relatively reliable connection to the UAV. Surprisingly, the NOMA scheme   is more reliable for the UAV connection than relaying, which is totally different from the results in  Fig.~\ref{throughput_comp} with respect to the throughput.  In particular, this performance advantage of NOMA with respect to outage probability is more significant when the UAV is   close  to the AP.

Combining the results in Fig.~\ref{throughput_comp} and Fig.~\ref{outage_comp}, we conclude that {the NOMA scheme has a higher probability to provide a connection to the UAV, while this connection generally results in a lower throughput to the UAV than the relaying scheme.
Hence, the NOMA scheme is more suitable for data traffic with small  data packet sizes and requiring a high probability of connection.} 
Moreover, 
 a hybrid scheme, which dynamically shifts between NOMA and relaying schemes, is strongly suggested for the considered UAV-connected ground network. 
 This hybrid scheme definitely achieves   better throughput and outage probability performances than the pure NOMA and relaying schemes, which is more important for UAVs with low-latency high reliability  applications.

\section{Conclusion}

In this paper, we have studied a ground wireless network supporting UAV transmissions, and considered NOMA and relaying schemes. 
The throughputs of UAV and GU are characterized in the IBL regime  and the FBL regime, respectively. 
Following the throughput characterizations, {we have provided optimal resource allocation designs for the NOMA and relaying schemes in both the IBL and FBL regimes. 
In particular, the objective of the designs is to maximize the UAV throughput while guaranteeing the GU's transmission quality with respect to throughput and reliability.}

Via simulations, we have validated our analytical model and evaluated the system performance. {In particular, the results suggest that the NOMA scheme is more preferred if the GU has a good link to the AP (e.g., GU is  within a short distance to the AP), while the relaying scheme demonstrates its advantage when the GU is relatively far form the AP.} 
 Moreover, the relaying scheme is able to create a win-win situation, i.e., improve the performance of the GU while additionally providing a transmission service to the UAV. 
 More importantly and surprisingly, we have observed that the relaying scheme offers a higher throughput to the UAV in the FBL regime than in the IBL regime, due to the fact that in the FBL regime less resources are required to guarantee the GU's transmission quality and therefore the extra resources can be allocated to the transmission to the UAV. {However, it is also observed that (in comparsion to relaying)  the NOMA scheme    provides a   higher UAV throughput by slightly sacrificing the GU's performance and is more reliable in  providing a connection to the UAV than relaying with respect to the outage probability.}
 Hence, a hybrid scheme, which dynamically shifts between NOMA and relaying schemes, is strongly advocated to achieve a high throughput while having a lower outage probability.


%

\end{document}